\def\leq{\leqslant}
\def\geq{\geqslant}
\def\text{\mbox}
\def\div{\mathrm{div}\,}
\def\dps{\displaystyle}
\def\R{{\mathbb R}}
\def\Z{{\mathbb Z}}
\def\N{{\mathbb N}}
\def\PP{{\mathbb P}}
\def\P{{\mathbb P}}
\def\PD{\P^\Delta}
\def\S{{\mathbb S}}
\def\pt{\partial}
\def\omm{\Omega}
\def\<{\langle}
\def\>{\rangle}
\def\inclus{{\hookrightarrow}}
\def\grad{\nabla}
\def\hxi{\hat{\xi}}
\def\dim{3}
\def\Rd{\R^\dim}
\def\bve{\;|\;}
\def\fleche{\rightarrow}
\newenvironment{remark}{ {\sc Remark -- } }   {\\}  
\newtheorem{theorem}{Theorem}[section]
\newtheorem{proposition}[theorem]{Proposition}
\newtheorem{lemma}[theorem]{Lemma}
\newcommand{\bfgreek}[1]{\bm{\@nameuse {up#1}}}
\newcommand{\combi}[2]{\dps{ \left(\begin{array}{c} #1\\ #2 \end{array}\right) } }
\def\Harm{{\mathbb H}}
\newcommand{\vc}[1]{ {#1}}
\newcommand{\hypo}[1]{(${\mathscr H}_{#1}$)}
\def\diag{{\rm{diag}}}
\def\Sd{{\mathbb S}^\dim}
\def\Sds{\Sd_\star}
\def\dime{{\rm dim}\, } 
\def\Sdeux{{\mathbb S}^2}
\def\H{H}
\def\dimen{{\rm dim}\,}
\def\YY{{\mathscr Y}}
\def\VV{{\mathscr V}}
\def\ww{{\small {\mathscr W}}}
\def\stm{s^{-1}}
\def\stm{\pi}
\def\Rt{\R^3}
\newcommand{\vecc}[1]{{#1}}
\def\zero{\vecc{0}}
\def\a{\vecc{a}}
\def\e{\vecc{e}}
\def\x{\vecc{x}}
\def\y{\vecc{y}}
\def\omm{\Omega}
\def\curl{{\boldsymbol{\rm curl}\,}}
\def\En{{\mathscr E}}
\newcommand{\pds}[2]{\< #1, #2 \>}
\def\Mg{\vecc{M}}
\def\PMg{\vecc{J}}
\def\Hd{\vecc{H}_d}
\def\vH{\vecc{h}}
\def\Mz{\vecc{M}_0}
\def\n{\vecc{n}}
\def\EEST{{\mathscr E}_{sf}}
\def\EESTH{{\mathscr E}_{sf}}
\def\indom{\chi_\omm}
\def\bomm{\overline{\omm}}
\def\St{{\mathbb S}^2}
\def\mz{\Mg_0}
\def\omm{\Omega}
\def\potf{U}
\def\RAY{{r_0}}
\def\tY{\widetilde{Y}}
\def\INDS{\Lambda}
\def\INDSS{\Lambda^\star}
\author[T. Z. Boulmezaoud]{Tahar Zamene BOULMEZAOUD$^{1, 2}$}
\address{\rm  $^1$ Universit\'e Paris-Saclay, UVSQ, LMV, Versailles, France.}
\address{\rm  $^2$  Department of Mathematics and Statistics, University of Victoria, Victoria, British Columbia, Canada.}
\email{tahar.boulmezaoud@uvsq.fr}
\title[Stray field computation by inverted finite elements]{\Large Stray field computation by inverted finite elements: a new method in micromagnetic simulations}
\title[A simple formula in micormagnetics]{A simple formula of the magnetic potential and of the stray field energy induced by a given magnetization}
\keywords{Micromagnetics, Landau–Lifshitz equation, Stray field, Arar-Boulmezaoud functions}
\subjclass{ 35J47, 35J15 , 35C10, 35C20 }
\begin{document}
\begin{abstract}
The primary aim of this paper is the derivation and the proof of a simple  and tractable formula for the stray field energy
in micromagnetic problems. The formula is based on an expansion in terms of Arar-Boulmezaoud functions. 
It  remains valid even if the magnetization is not of constant magnitude or if the sample
 is not geometrically bounded.  The paper continuous with 
a direct and important application which consists in a fast summation technique of the stray field energy. 
The convergence of this technique is established and its efficiency is proved by various numerical experiences.  
 \end{abstract}

\begingroup
\def\uppercasenonmath#1{} 
\let\MakeUppercase\relax 
\maketitle
\endgroup
\setcounter{tocdepth}{1}
\tableofcontents

\section{Introduction}
The description and the understanding of magnetic microstructures are often based 
on the theory of Landau and Lipschitz \cite{LandLif} (see also \cite{brown63}) which consists in minimizing of the total free energy
(see, e. g., \cite{hubert}, \cite{Prohlbook}, \cite{prolhKru2006} and \cite{miyazaki}): 
\begin{equation}\label{totalEnrg}
E_{tot}(\Mg) = \alpha \int_{\omm} | \grad \Mg|^2 dx + \int_{\omm} \phi(\Mg) dx - \frac{1}{2} \int_{\omm} \H_d.\PMg dx - \int_{\omm}  \H_{ex} . \PMg dx + E_s,
\end{equation}
where $\omm$ is the sample (or the magnetic body), $ \alpha $ is the exchange stiffness (positive) constant, $\phi$ is a function describing structural
anisotropies,  $\H_{ex}$ is an external field,  $\Hd$ is  the stray 
(or demagnetizing) field generated by the magnetic body itself  and $E_s$  is the sum the remaining energies (like magnetostrictive self-energy  and magneto-elastic 
 interaction energy). The magnetic polarisation $\PMg$ is given by the formula $\PMg = \mu_0 \Mg$, while  the stray field $\Hd$ is related 
 to $\Mg$ by the equations:
\begin{equation}\label{equaHd}
\curl \Hd = \zero \mbox{ in } \R^3, \; \div (\mu_0 (\Hd + \Mg\chi_{\omm})) = 0 \mbox{ in } \R^3,
 \end{equation}
 where $\chi_{\omm}$ denotes the characteristic function of the sample.    \\
 
 The magnetization
 $\Mg$ is often subject to the Heisenberg-Weiss constraint
\begin{equation}\label{HeiWeiCtr}
|\Mg| = M_s \mbox{ a. e. in } \omm,
\end{equation}
where $M_s$ is the spontaneous saturation magnetization which is assumed to be constant (and generally depending on the temperature). 
Although the reader may assume that $\Mg$ complies with this constraint, we will see that it is not necessary for the validity of the main 
results stated here; a much weaker constraint on $M$ suffices (see assumption \hypo{2} below). \\
 In the litterature, much attention is paid to the calculation of the stray field 
 energy resulting from demagnetizing field $\Hd$: 
\begin{equation}\label{stren1}
\EEST(\Mg) := - \frac{\mu_0}{2} \int_{ \omm} \! \H_d.\Mg dx. 
\end{equation}
In view of equations \eqref{equaHd}, $\Hd$ is curl free and can be written into the form 
\begin{equation}
\Hd = - \grad \potf
\end{equation}
 (see \cite{girault0}),  where $ \potf$ is the magnetic potential which is solution 
of the Poisson equation in {\it the whole space}: 
\begin{equation}\label{main_equa}
\Delta  \potf  =  \div (\Mg \indom)   \mbox{ in } \R^3. 
\end{equation}
The stray field energy can be expressed as
\begin{equation}\label{stren20}
\EEST(\Mg) =   \frac{\mu_0}{2} \int_{\R^3}  \!  |\grad  \potf|^2 dx = \frac{\mu_0}{2} \int_{\R^3} \!   |\Hd|^2 dx. 
\end{equation}
Computing the stray field energy \eqref{stren20} is one of the most 
challenging issues in micromagnetics (see, e. g., \cite{hubert} and \cite{Prohlbook}).  The difficulty is mainly due to its non local nature. There are several methods for the effective calculation of this energy. 
Some of these methods are based on solving the elliptic partial differential equation
 \eqref{main_equa} using finite differences method (see, e. g.,   \cite{Berkov1}, \cite{Vansteenkiste}, \cite{abert1}), or  finite elements method (see, e. g.,  \cite{KoehlerFred1, KoehlerFred2}, \cite{Aurada}, \cite{carstensen2001}), or  inverted finite elements method (\cite{boulmezaoudm2an}, \cite{babatin1},  \cite{boulmezaoud_kerdid_kaliche, KBoul_SF_IFEM}, \cite{boulmezaoud_bhowmik},  \cite{mziou0}, \cite{these_kaliche} and \cite{KBoul_SF_IFEM}). Other methods are based on the calculation of $U$ from the integral formula  (see, e. g.,   \cite{BlueShen},  \cite{longOng},   \cite{Popovi}, \cite{ExlAuz}, \cite{Marty2002}, \cite{labbe}): 
\begin{equation}\label{green_formula}
U(\x) = \frac{1}{4\pi} \int_{\omm}  \!  \frac{(\y-\x).\Mg(\y)}{|\y-\x|^3} d\y. 
\end{equation}
The primary aim of this work is to establish the following  formula
\begin{equation}\label{stren2}
\EEST(\Mg) =   \frac{\mu_0}{2} \sum_{k=0}^\infty \frac{4}{4(k+1)^2-1} \sum_{\alpha \in \INDS_k}  \left(\! \int_{\omm}\Mg.\grad \ww_{\alpha} dx\! \right)^2,
\end{equation}
where $(\ww_\alpha)_{\alpha}$ designate the Arar-Boulmezaoud functions \footnote{Although these functions were discovered by N. Arar and the author in  \cite{ararboulmezaoud1}, the choice of this appellation is not due to the authors, but to a reviewer of one of \cite{arar_boulmez_kk} who asked to choose this appellation.}  introduced in \cite{ararboulmezaoud1} and in \cite{arar_boulmez_kk}. 
These functions will be presented along with their properties in Section  \ref{ABfunc_sec}.  
We also prove the following formula for the magnetic potential  
\begin{equation}\label{u_expression}
 \potf  = \sum_{k=0}^\infty    \sum_{\alpha \in \INDS_k} \frac{4}{4(k+1)^2-1}  \left(\! \int_{\omm}\Mg.\grad \ww_{\alpha} dx\! \right)   \ww_{\alpha}.
\end{equation}
Another by-product, as we shall see, concerns approximation of the stray-field energy \eqref{stren20}. More precisely, 
truncating formula \eqref{stren2} gives the approximation
\begin{equation}\label{stren2N}
\EEST^N(\Mg) =   \frac{\mu_0}{2} \sum_{k=0}^N \frac{4}{4(k+1)^2-1} \sum_{\alpha \in \INDS_k}  \left(\int_{\omm}\Mg.\grad \ww_{\alpha} dx\right)^2,
\end{equation}
When $\Mg . \n = 0$,  we establish the estimate 
\begin{equation}
0 \leq  \EEST(\Mg)-\EEST^N(\Mg)  \leq C N^{-2} \| \div \Mg\|_{L^2(\omm)}. 
\end{equation}
The reader interested in formulas above but not in details of the proof can admit that 
\eqref{stren2} is valid for any connected open set $\omm$, not necessarily bounded, and any 
(measurable) vector field $\Mg$ satisfying
$$
\int_{ \omm} |\Mg|^2 dx < +\infty
$$
Nevertheless, the latter condition is obviously fulfilled when the sample $\omm$ has a finite volume and 
$\Mg$ satisfying the Heisenberg-Weiss constraint \eqref{HeiWeiCtr}. \\
$\;$\\
The rest of the paper is organized as follows. In Section \ref{ABfunc_sec} we present Arar-Boulmezaoud functions
which are the key ingredient of this paper. Their most useful properties are listed.
These properties are essentially known and no originality is claimed in Section \ref{ABfunc_sec}. 
The formulas that form the main output of this paper are presented and proved in Section   \ref{mainRes_sect}.  
In Section \ref{NumMeth}, a new method for calculating the energy resulting from these formulas is proposed
and analyzed. In particular,  the convergence of the method is established. In section \ref{implem_sec} 
focus in on  computational tests through several examples. The last section is devoted to a conclusion. 
\section{Overview of Arar-Boulmezaoud functions}\label{ABfunc_sec}
 In  \cite{ararboulmezaoud1}, Arar and the author  introduced a family of multi-dimensionnal rational and quasi-rational functions
$(\ww_{\alpha})$ which turned out to be particularly appropriate for solving second order elliptic equations
in unbounded regions of space (see \cite{arar_boulmez_kk}). This is primarily due
to their completeness, their orthogonal properties and their behavior at large distances. \\
The definition of these functions in $\R^3$ necessitates the use of spherical harmonics
on the unit sphere of $\R^4$ and the stereographic projection (four dimensional spherical harmonics  are less encoutered than those on $\Sdeux$ the unit sphere of $\R^3$).  \\
For each integer $k \geq 0$, $\Harm_{k}$ will be the space 
of spherical harmonics of degree $k$ over the unit sphere (see, e. g.,
  \cite{seeley}, \cite{muller}, \cite{tomita}, \cite{torres}, \cite{friess}):
$$\Sd := \{ \x \in \R^{4} \mid |\x|=1\}$$
(spherical harmonics of degree $k$ on $\Sd$ are restrictions to $\Sd$  of harmonic 
homogeneous polynomials of degree $k$ on $\R^4$).  We know that
\begin{equation}\label{dim_harm1}
\dime \Harm_{k}= (k+1)^2 \mbox{ for all } k \geq 0.
\end{equation}
In order to construct an orthogonal basis of  $\Harm_{k}$, we set
$$
\INDS = \{(i , \ell, m)  \in \N^2 \times \Z \bve 0 \leq \ell \leq i \mbox{ and } -\ell \leq m \leq \ell\}, 
$$
and for each integer $k \geq 0$
$$
\begin{array}{rcl}
\INDS_k &=& \dps \{(i, \ell, m)  \in \INDS  \bve i = k\}, \\
\INDSS_k&=&  \bigcup_{i=0}^k  \INDS_i. 
\end{array}
$$
If  $\alpha = (k, \ell, m), \; \beta = (k', \ell', m')\in \INDS$, then $\delta_{\alpha, \beta}$ denotes the
usual Kronecker symbol of $\alpha, \beta$, that is $\delta_{\alpha, \beta}= \delta_{k, k'} \delta_{\ell, \ell'} \delta_{m, m'}.$ Define the spherical coordinates for $\S^3$ as the triplet $(\phi, \theta, \chi)$ such that
$0 \leq \phi < 2 \pi$, $0 \leq \theta \leq \pi$, $0 \leq \chi \leq \pi$ and 
\begin{equation}\label{spherical_coordR3}
\xi = (\sin \theta \cos \phi \sin \chi,   \sin \theta \sin \phi \sin \chi,   \cos \theta  \sin \chi,  \cos  \chi),
\end{equation}
Spherical harmonics on $\S^3$ are defined by: 
\begin{equation}\label{defunc_Yalpha}
\YY_{\alpha}  (\xi)= \dps{     \frac{1}{\sqrt{a_{k, \ell}} } (\sin \chi)^\ell T_{k +1}^{(\ell+1)} (\cos \chi) Y_{\ell, m} (\phi, \theta),  } \mbox{  for  }
\alpha = (k, \ell, m) \in \INDS.
\end{equation} 
Here 
\begin{itemize}
\item[--]  $(T_k)_{k \geq 0}$ designate Chebyshev polynomials of the first kind satisfying 
$$
  \cos(k \theta) = T_k(\cos \theta) \mbox{ for } \theta \in \R,
$$
\item[--]  $(Y_{\ell, m})_{\ell, m}$  are the usual real spherical harmonics on $\S^2$: 
\begin{equation}\label{def_YLM} 
Y_{\ell, m} (\phi, \theta) = \eta_{\ell}   K_\ell^{|m|} (\cos \theta)  y_{m}(\phi) 
\end{equation}
where 
\begin{equation}\label{def_YLM_anx} 
y_{m}(\phi) = 
\left\{
\begin{array}{ll}
\dps{   \cos(m \phi)}& \mbox{ if } m \geq 1,  \; \\ 
 \dps \frac{1}{\sqrt{2}} & \mbox{ if } m = 0, \; \\
\dps{  \sin(|m| \phi) }& \mbox{ if } m \leq -1,
\end{array}
\right.  
\end{equation} 
\begin{equation} 
\eta_{\ell} =     \sqrt{\dps{\frac{2 \ell+1}{2\pi }}}, 
\end{equation}
and 
\begin{equation} \label{def_YLM_bis}
K_\ell^m (x) = (-1)^m \sqrt{\frac{(\ell-m)!}{(\ell+m)!}} P_\ell^m (x), \; \mbox{ for }  -\ell \leq m \leq \ell
\end{equation}
(thus, $K_\ell^{-m} = (-1)^m K_\ell^m$).  Here $(P_\ell^m)_{\ell, m}$ designate the associated Legendre functions defined as: 
$$
P_\ell^m (t) = \frac{(-1)^m}{2^\ell \ell!}  (1-t^2)^{m/2} \frac{d^{\ell+m}}{dt^{\ell+m}} (t^2-1)^\ell, \; -\ell \leq m \leq \ell 
$$
(some authors omit the $(-1)^m$ factor, commonly referred to as the Condon-Shortley phase, or append it in the definition of  $Y_{\ell, m}$). 
We also adopt the convention $P_\ell^m = 0$ and $K_\ell^m=0$ when $|m| > \ell$. 
\item[--]  $(a_{k, \ell})$  are  normalization constants given by 
\begin{equation} 
a_{k, \ell} = \frac{(k+1)\pi}{2} \frac{(k+\ell+1)!}{(k-\ell)!}. 
\end{equation}
\end{itemize}
The  following properties hold true
\begin{itemize}
\item For all $k \geq 0$,  $( \YY_{\alpha} )_{\alpha \in \INDS_k }$ is a basis of $\Harm_{k}$.
\item For all $\alpha, \beta \in \INDS$
\begin{equation}
\int_{\S^3} \YY_{\alpha}  (\xi) \YY_{\beta} (\xi)dS(\xi) = \delta_{\alpha, \beta}. 
\end{equation}
\item For all $k \geq 0$ and $\alpha \in  \INDS_k$, 
$$
-\Delta_S \YY_{\alpha} = k(k+2)\YY_{\alpha}, 
$$
where $\Delta_S$ is the Laplace-Beltrami operator over the unit sphere $\S^3$. In terms of  
spherical coordinates, this operator is 
 given  by  
  $$
\dps \frac{1}{\sin^2 \chi}  \left\{ \frac{\pt }{\pt \chi } \left( \sin^2 \chi \frac{\pt }{\pt \chi } \right) +    \frac{1}{\sin \theta}   \frac{\pt }{\pt \theta} \left( \sin \theta \frac{\pt }{\pt  \theta}  \right) +   \frac{1}{\sin^2 \theta}  \frac{\pt^2 }{\pt \phi^2}  \right\}. 
 $$
\end{itemize}
In the three-dimensional situation (the only one that interests us here),   Arar-Boulmezaoud functions are defined as follows (see \cite{ararboulmezaoud1} and  \cite{arar_boulmez_kk}): for any
$\alpha\in \INDS$ 
\begin{equation}\label{express_eigen1}
\ww_{\alpha}(x) =   \left( \frac{2}{|\x|^2+1}\right)^{\frac{1}{2} } \YY_{\alpha} (\stm^{-1}(\x)).
\end{equation}
Here  $\stm$ denotes the 
stereographic projection  defined on $\Sds = \Sd - \{(0,\cdots, 0, 1)$  by
\begin{eqnarray*}
\pi \;: \; \Sds&\longrightarrow&\Rd\\
\vc{\xi}&\longmapsto&  (\frac{\xi_{1}}{1-\xi_{4}},\frac{\xi_{2}}{1-\xi_{4}}, 
\frac{\xi_{3}}{1-\xi_{4}}).
\end{eqnarray*}
Its inverse  is given by
\begin{eqnarray*}
\stm^{-1} \;: \; \Rd&\longrightarrow&\Sds\\
\x&\longmapsto&    \left(\frac{2x_1}{|\x|^{2}+1}, \frac{2x_2}{|\x|^{2}+1}, \frac{2x_3}{|\x|^{2}+1},  \frac{|\x|^{2}-1}{|\x|^{2}+1}  \right).
\end{eqnarray*}
Functions $(\ww_{\alpha})_{\alpha \in \INDS}$ were discovered by Arar and the author in \cite{ararboulmezaoud1} in studying spectrum
of weighted Laplacians in $\R^n$. In Table \ref{table_expli_func} of Appendix B,  the expressions of the first functions $(\ww_{\alpha})_{\alpha \in \INDS}$  are given explicitly. We can then see that these functions have a rational nature. This is a general property as will be announced later.  One can also consult  \cite{ararboulmezaoud1} and  \cite{arar_boulmez_kk}  for higher dimensions and for $n=1$ or $n=2$. \\
In the following Proposition we summarize some useful properties of the functions $(\ww_{\alpha})_{\alpha \in \INDS}$. We refer to \cite{ararboulmezaoud1} and  \cite{arar_boulmez_kk}  for their proofs. 
 \begin{proposition}\label{form_func_basis}
 Let  $k \geq 0$ be an integer and $ \alpha \in \INDS_k$. Then,
 \begin{itemize}
 \item[--] we have
\begin{equation}\label{eigenformula1}
  -\Delta \ww_{\alpha} = \frac{(2k + 1) (2k + 3)}{ (|\x|^2+1)^{2}} \ww_{\alpha},
 \end{equation}
\item[--]   there exists $k+1$ polynomial functions  $p_0, \cdots, p_k$ such that:
\begin{equation}\label{decompo_walpha}
 \ww_{\alpha}  (\x) = \sum_{i=0}^k \frac{p_i(\x)}{(|\x|^2+1)^{i+1/2}},
 \end{equation}
where  for each $ i \leq \ell$, $p_i$ is of degree less than or equal to $i$,
 \item[--]  for all  $\beta \in \INDS$
 \begin{eqnarray}
\dps{ \int_{\Rd} \frac{\ww_{\alpha} (x) \ww_{\beta}(\x) } {(|\x|^2+1)^2} dx }&=& \dps{\frac{1}{4}  \delta_{\alpha, \beta},} \label{ortho_prop_0_1}  \\
\dps{  \int_{\Rd} \grad \ww_{\alpha} (x) . \grad \ww_{\beta}(\x)  dx} &=& \dps{  \frac{ (2k + 1) (2k + 3)}{4} \delta_{\alpha, \beta}.} \label{ortho_prop_0_1_bis}
\end{eqnarray}  
\end{itemize}
\end{proposition}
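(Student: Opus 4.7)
The plan is to prove the three bullets in order, transporting the known eigenfunction and orthogonality properties of the spherical harmonics $\YY_\alpha$ on $\S^3$ to $\R^3$ via the stereographic projection. The main obstacle will be the eigenvalue identity in the first bullet: once that is in hand, the rational decomposition and the two orthogonality relations follow by direct substitution, change of variables, or integration by parts, but getting the correct eigenvalue $(2k+1)(2k+3)$ in three dimensions requires careful tracking of the conformal factor $\rho(\x)=2/(|\x|^2+1)$ that appears in the definition of $\ww_\alpha$.

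For the eigenvalue equation, the cleanest route is via the conformal covariance of the Laplacian. Under $\stm^{-1}$ the round metric on $\S^3$ pulls back to $\rho^2 g_{\R^3}$ on $\R^3$, and in dimension $n=3$ the Yamabe operator $L_g = -\Delta_g + \frac{n-2}{4(n-1)}R_g$ satisfies the intertwining relation $L_{\S^3}(\phi) = \rho^{-5/2} L_{\R^3}(\rho^{1/2}\phi)$. Since $R_{\R^3}=0$ and $R_{\S^3}=6$, one has $L_{\R^3} = -\Delta$ and $L_{\S^3} = -\Delta_S + 3/4$. Applying this to $\phi = \YY_\alpha\circ\stm^{-1}$, the eigenvalue $k(k+2)$ of $-\Delta_S$ shifts to $k(k+2)+3/4 = (2k+1)(2k+3)/4$ on the Yamabe side, and unwinding yields $-\Delta\ww_\alpha = \rho^2 \cdot \frac{(2k+1)(2k+3)}{4}\ww_\alpha$, which is exactly the claimed identity. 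A purely computational alternative is to expand $\Delta(\rho^{1/2}\tilde\YY_\alpha)$ with the product rule and then to express $\Delta_S$ in the stereographic chart, but this obscures the origin of the constant.

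For the rational decomposition, I would substitute the explicit formula defining $\YY_\alpha$ into $\ww_\alpha = \rho^{1/2}(\YY_\alpha\circ\stm^{-1})$. Writing $\cos\chi = (|\x|^2-1)/(|\x|^2+1)$, $\sin\chi = 2|\x|/(|\x|^2+1)$, and observing that the angular coordinates satisfy $\sin\theta\cos\phi = x_1/|\x|$ etc., one sees that $Y_{\ell,m}(\phi,\theta)\cdot|\x|^\ell$ is the solid harmonic, i.e.\ a homogeneous harmonic polynomial $P(\x)$ of degree $\ell$. Consequently $(\sin\chi)^\ell Y_{\ell,m}(\phi,\theta) = 2^\ell P(\x)/(|\x|^2+1)^\ell$. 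The remaining factor $T_{k+1}^{(\ell+1)}(\cos\chi)$ is a polynomial of degree $k-\ell$ in $\cos\chi$; expanding each power via $|\x|^2-1 = (|\x|^2+1)-2$ and the binomial theorem exhibits it as a linear combination of $(|\x|^2+1)^{-j}$ for $0\leq j\leq k-\ell$. Multiplying by $\rho^{1/2}$ produces the claimed decomposition with $p_i$ vanishing for $i<\ell$ and proportional to $P$ (hence of degree $\ell\leq i$) for $\ell\leq i\leq k$.

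For the orthogonality relations, the weighted $L^2$ identity follows from a change of variables: the Jacobian of $\stm$ gives $dS(\xi) = 8(|\x|^2+1)^{-3}\,d\x$, so that $\ww_\alpha(\x)\ww_\beta(\x)(|\x|^2+1)^{-2}\,d\x = \tfrac{1}{4}\YY_\alpha(\xi)\YY_\beta(\xi)\,dS(\xi)$, and the identity reduces to the orthonormality of $(\YY_\alpha)$ on $\S^3$ already recalled. For the gradient identity, I would integrate by parts against the eigenvalue equation: $\int_{\R^3}\grad\ww_\alpha\cdot\grad\ww_\beta\,d\x = \int_{\R^3}(-\Delta\ww_\alpha)\ww_\beta\,d\x = (2k+1)(2k+3)\int_{\R^3}\ww_\alpha\ww_\beta(|\x|^2+1)^{-2}\,d\x$, which equals $(2k+1)(2k+3)\delta_{\alpha,\beta}/4$ by the weighted $L^2$ identity just established. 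The boundary terms at infinity vanish thanks to the decomposition of the second bullet, which implies $\ww_\alpha = O(|\x|^{-1})$ and $\grad\ww_\alpha = O(|\x|^{-2})$.
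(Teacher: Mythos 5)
Your proof is correct and complete. Note, however, that the paper itself does not prove Proposition \ref{form_func_basis}: it explicitly defers all four statements to the references \cite{ararboulmezaoud1} and \cite{arar_boulmez_kk}, so there is no in-paper argument to compare against. What you supply is therefore a genuine addition. Your route to the eigenvalue identity via the conformal covariance of the Yamabe operator $L_g=-\Delta_g+\frac{n-2}{4(n-1)}R_g$ is the slick one: with $n=3$, $u=\rho^{1/2}$, $R_{\S^3}=6$, the shift $k(k+2)\mapsto k(k+2)+\tfrac34=\tfrac{(2k+1)(2k+3)}{4}$ comes out for free, whereas the cited works obtain \eqref{eigenformula1} by studying the weighted operator $(1+|\x|^2)^2\Delta$ directly and computing its spectrum; the conformal argument explains \emph{why} the power $1/2$ in \eqref{express_eigen1} and the constant $(2k+1)(2k+3)$ are forced, at the cost of invoking machinery the paper never mentions. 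Your remaining steps are exactly what one would expect and are carried out correctly: the identities $\cos\chi=\frac{|\x|^2-1}{|\x|^2+1}$, $\sin\chi=\frac{2|\x|}{|\x|^2+1}$ together with the solid-harmonic factorization $|\x|^\ell Y_{\ell,m}=P(\x)$ give \eqref{decompo_walpha} with $p_i=0$ for $i<\ell$; the Jacobian $dS(\xi)=8(|\x|^2+1)^{-3}d\x$ reduces \eqref{ortho_prop_0_1} to the stated orthonormality of $(\YY_\alpha)$ on $\S^3$; and \eqref{ortho_prop_0_1_bis} follows by integration by parts against \eqref{eigenformula1}, with the boundary terms controlled by the decay $\ww_\alpha=O(|\x|^{-1})$, $\grad\ww_\alpha=O(|\x|^{-2})$ that you correctly extract from the second bullet. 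One cosmetic remark: the paper's phrase ``for each $i\leq\ell$'' in the statement is evidently a typo for ``$i\leq k$''; your decomposition handles either reading since $\deg p_i\leq i$ for every $i$.
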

 The orthogonality identities  \eqref{ortho_prop_0_1}  and \eqref{ortho_prop_0_1_bis} are among the most important properties  of 
Arar-Boulmezaoud functions. as we will see later. More particularly, these relations will be the cornerstone of the formula
 given in this paper and of the resulting numerical approximation. \\
 $\;$\\
Here ends this first enumeration of the properties of functions $(\ww_{\alpha})$. 
We will need other properties later on, in particular for the calculation of 
gradients (see paragraph \ref{calc_grad_sec}).

\section{The first main result: the formulas}\label{mainRes_sect}
The objective here is to prove formulas \eqref{stren2} and \eqref{u_expression} announced in the introduction. 
These formulas will be used 
in the next section to propose a new method for computing stray-field energy. However, before stating the 
first main result, it is appropriate to give some basics concerning
the underlying functional framework we use here. In particular, we show the well-posed nature of 
the equation  \eqref{main_equa}. \\
$\;$\\
Here and subsequently, we assume that 
\begin{itemize}
\item[\hypo{1}] the material fills a connected open set $\omm \subset \R^3$
having a lipschitzian boundary, 
\item[\hypo{2}] the magnetization field $\Mg$ is  defined and  measurable over $\omm$ and satisfies 
\begin{equation}\label{assump_h}
\int_{\omm} |\Mg|^2 dx < \infty, 
\end{equation}
that is $\Mg \in L^2(\omm)^3$.  
\end{itemize}
Assumption \hypo{2} is obviously fulfilled when $|\Mg|$ satisfies the Heisenberg-Weiss constraint \eqref{HeiWeiCtr}
and $|\omm| < \infty$ since
$$
\|\Mg\|^2_{L^2(\omm)^3} = \int_{\omm} |\Mg|^2 dx = |\Mg|^2  |\omm| < +\infty. 
$$
Despite this, we assume neither that $\omm$ is bounded  nor that 
 $|\Mg|$ is satisfying the Heisenberg-Weiss constraint \eqref{HeiWeiCtr}. Only assumptions
  \hypo{1} and \hypo{2} are needed here. \\
We now introduce some weighted function spaces. For all integers $\ell \in \Z$ and $m \geq 0$, 
  $W^m_\ell(\R^3)$ stands for the space of functions $v$ satisfying 
 $$
\forall |\lambda| \leq m, \; (1+|\x|^2)^{(\ell+|\lambda|-m)/2} D^\lambda v \in L^2(\Rt).
$$
This space is equipped with the norm 
\begin{equation}
\|v\|_{W^m_\ell(\Rt)} = ( \sum_{|\lambda| \leq m} \int_{\Rt}  (|\x |^2+1)^{|\lambda|+\ell-m}   |D^\lambda v|^2dx)^{1/2}. 
\end{equation}
When $m \geq 1$, the following inclusions hold: 
$$
W^m_\ell(\R^3) \inclus  W^{m-1}_{\ell-1}(\R^3) \inclus \cdots \inclus W^{1}_{\ell-m+1}(\R^3) \inclus  W^{0}_{\ell-m}(\R^3). 
$$
The following asymptotic property holds true for any function  $v \in W^m_\ell(\R^3)$  (see, e. g., \cite{alliot_these})
\begin{equation}\label{prop_asympt_in_wei}
\lim_{|\x| \fleche + \infty} |\x|^{\ell-m + 3/2} \|v(|\x|, .)\|_{L^2(\St)}   = 0. 
\end{equation}
where $\St$ is the unit sphere of $\R^3$ and 
\begin{equation}\label{defin_normSt}
\|v(|\x|, .)\|^2_{L^2(\St)}  = \int_{\St} |v(|\x|, \sigma )|^2 d\sigma. 
\end{equation}
Let us mention the following Hardy's type inequality in $W^1_0(\R^3)$ (see \cite{KBoul_SF_IFEM}): 
\begin{equation}\label{hardy_inequa_R3}
\forall v \in W^1_0(\R^3), \; \int_{\R^3} \frac{|v|^2}{|\x|^2+1} dx \leq 4 \int_{\R^3} |\grad v|^2 dx. 
\end{equation}
Thus, from now on, we shall consider that the Hilbert space  $W^1_0(\R^3)$ is endowed with the scalar product 
$$
((v, w))_{W^1_0(\R^3)} = \int_{\R^3} \grad v. \grad w dx, 
$$
and with the corresponding norm 
$$
|v|_{W^1_0(\R^3)} = |\grad v|_{L^2(\R^3)},
$$
which is equivalent to the norm $\|.\|_{W^1_0(\R^3)}$.  \\
Here, we look for a solution $\potf$ of \eqref{main_equa} satisfying
\begin{equation}\label{asympto_grad}
 \int_{\R^3} |\grad \potf|^2 dx < \infty, 
\end{equation}
The first main result of this paper is summarized as follows:
\begin{theorem}\label{main_res1}
Assume that assumptions  \hypo{1} and \hypo{2} hold true. Then  \eqref{main_equa} has a unique
solution  $\potf \in W^1_0(\R^3)$ which is given by
\begin{equation}\label{u_expression_th}
\potf  = \sum_{k=0}^\infty    \sum_{\alpha \in \INDS_k} \frac{4}{(2k+1)(2k+3)}  \left(\int_{\omm}\Mg.\grad \ww_{\alpha} dx\right)   \ww_{\alpha}, 
\end{equation}
where the serie in the right-hand side converges in $W^1_0(\R^3)$. The corresponding stray field energy is given by 
\begin{equation}\label{formula_nrg_th}
\EEST(\potf ) =  \sum_{k=0}^\infty \frac{2 \mu_0}{(2k+1)(2k+3)} \sum_{\alpha \in \INDS_k}  \left(\int_{\omm}\Mg.\grad \ww_{\alpha} dx\right)^2. 
\end{equation}
Moreover, 
\begin{enumerate}
\item we have
\begin{eqnarray}
\| (1+|\x|^2)^{-1/2}  \potf \|_{ L^2(\R^3)}  &\leq& 4 \|\Mg\|_{L^2(\omm)}, \label{estima_sol_ord0} \\
\EEST(\potf) &\leq& \dps  \frac{\mu_0}{2} \|\Mg\|^2_{L^2(\omm)}. \label{estima_sol_ord1} 
\end{eqnarray} 
\item  $\potf  \in L^2(\R^3)$, $(1+|\x|^2)^{1/2} \grad \potf  \in L^2(\R^3)^3$ and 
\begin{equation}\label{estima_sol_ord11}
\|\potf\|_{L^2(\R^3)}  +    \|(1+|\x|^2)^{1/2} \grad \potf\|_{L^2(\R^3)^3} \leq C_0(\omm) \|\Mg\|_{L^2(\omm)},  
\end{equation}
for some constant $C_0(\omm) > 0$ depending only on $\omm$. 
\item we have 
\begin{equation}\label{lim_infini_U}
\lim_{|\x| \fleche + \infty} |\x|^{3/2}  \|\potf(|\x|,  .)\|_{L^2(\St)} = 0,
\end{equation}
\end{enumerate}
\end{theorem}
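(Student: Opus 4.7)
My strategy is to recast \eqref{main_equa} as a coercive variational problem in $W^1_0(\R^3)$ and then to read off the expansion \eqref{u_expression_th} and the energy formula \eqref{formula_nrg_th} from the fact that a suitable rescaling of the Arar--Boulmezaoud family is a Hilbert basis of that space. Multiplying \eqref{main_equa} by $v \in W^1_0(\R^3)$ and integrating by parts, the boundary term at infinity vanishes thanks to the asymptotic property \eqref{prop_asympt_in_wei} of $v$ together with $\Mg \in L^2(\omm)^3$. This yields the weak formulation: find $\potf \in W^1_0(\R^3)$ with
\[
\int_{\R^3}\grad \potf \cdot \grad v\,dx \;=\; \int_\omm \Mg \cdot \grad v\,dx, \qquad \forall\, v \in W^1_0(\R^3).
\]
The left-hand side is precisely the scalar product on $W^1_0(\R^3)$ (hence continuous and coercive by construction), while Cauchy--Schwarz bounds the right-hand side by $\|\Mg\|_{L^2(\omm)}\,|\grad v|_{L^2(\R^3)}$. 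Lax--Milgram applies and delivers existence, uniqueness, and the a priori bound $|\grad \potf|_{L^2(\R^3)} \leq \|\Mg\|_{L^2(\omm)}$, which is exactly \eqref{estima_sol_ord1}. Combining it with the Hardy inequality \eqref{hardy_inequa_R3} immediately gives \eqref{estima_sol_ord0}.

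For the expansion and the energy formula, the orthogonality relation \eqref{ortho_prop_0_1_bis} shows that the rescaled family $e_\alpha := \tfrac{2}{\sqrt{(2k+1)(2k+3)}}\,\ww_\alpha$, for $\alpha\in \INDS_k$ and $k\geq 0$, is orthonormal in $W^1_0(\R^3)$. Its totality in $W^1_0(\R^3)$ is the completeness property of the Arar--Boulmezaoud system established in \cite{arar_boulmez_kk}, so $(e_\alpha)$ is in fact a Hilbert basis. Testing the weak formulation with $v = \ww_\alpha$ gives $\int_{\R^3}\grad \potf\cdot\grad \ww_\alpha\,dx = \int_\omm \Mg\cdot\grad \ww_\alpha\,dx$, and the Fourier expansion of $\potf$ in the basis $(e_\alpha)$ rearranges into exactly the series \eqref{u_expression_th}, with convergence in $W^1_0(\R^3)$. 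Parseval's identity then reads
\[
|\grad \potf|^2_{L^2(\R^3)} \;=\; \sum_{k\geq 0}\sum_{\alpha\in \INDS_k}\frac{4}{(2k+1)(2k+3)}\Bigl(\int_\omm \Mg\cdot\grad \ww_\alpha\,dx\Bigr)^2,
\]
and multiplication by $\mu_0/2$ yields \eqref{formula_nrg_th}.

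It remains to establish the stronger regularity \eqref{estima_sol_ord11} and the pointwise limit \eqref{lim_infini_U}. Once $\potf \in L^2(\R^3)$ has been obtained, \eqref{lim_infini_U} is an immediate application of \eqref{prop_asympt_in_wei} with $m = \ell = 0$. The genuine obstacle, and the step I expect to be the most delicate, is to promote $\potf$ from $W^1_0(\R^3)$ to a weighted class with $\potf\in L^2(\R^3)$ and $(1+|\x|^2)^{1/2}\grad \potf \in L^2(\R^3)^3$: the Lax--Milgram framework a priori places $\potf$ only in $W^1_0(\R^3)$, which does not contain the basis function $\ww_{(0,0,0)}\sim 1/|\x|$ in $L^2(\R^3)$. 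The plan here is to exploit the specific structure of the source --- a distributional divergence, which necessarily has zero monopole moment --- and to represent $\potf$ via the Newton potential \eqref{green_formula}. For bounded $\omm$ this directly yields a dipole-type decay $\potf(\x) = O(|\x|^{-2})$, while Calder\'on--Zygmund $L^2$ bounds for the Riesz transforms applied to $\Mg\indom$ give the additional weight on $\grad \potf$. The unbounded case is handled by exhausting $\omm$ by bounded subdomains and passing to the limit through the uniform a priori bound obtained above; this exhaustion explains the implicit dependence of $C_0(\omm)$ on the geometry.
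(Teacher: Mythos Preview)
Your argument for existence, uniqueness, the expansion \eqref{u_expression_th}, the energy identity \eqref{formula_nrg_th}, and the bounds \eqref{estima_sol_ord0}--\eqref{estima_sol_ord1} is essentially identical to the paper's: weak formulation, Lax--Milgram, Cauchy--Schwarz, Hardy, and then the Hilbert-basis expansion in $(\ww_\alpha)_\alpha$ with Parseval. For \eqref{estima_sol_ord11} the paper does not give an argument at all but simply cites \cite{KBoul_SF_IFEM}; your Newton-potential sketch (dipole decay from the vanishing zeroth moment of $\div(\Mg\chi_\omm)$, plus local $L^2$ bounds) is a reasonable self-contained substitute in the bounded case, while the exhaustion for unbounded $\omm$ is admittedly heuristic---but so is the paper on this point.

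One small slip: you invoke \eqref{prop_asympt_in_wei} with $m=\ell=0$ to get \eqref{lim_infini_U}. For $m=0$ the space $W^0_0(\R^3)=L^2(\R^3)$ carries no control on spherical traces, so the limit along \emph{every} radius need not hold (only along a subsequence). The clean fix is the one the paper uses: once \eqref{estima_sol_ord11} is known you have $\potf\in W^1_1(\R^3)$, and \eqref{prop_asympt_in_wei} with $m=\ell=1$ gives exactly $|\x|^{3/2}\|\potf(|\x|,\cdot)\|_{L^2(\St)}\to 0$.
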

Issues concerning 
the regularity of the solution $U$ are postponed to next section (see Theorem  \ref{conver_theorem}). 
\begin{proof}
We can reformulate equation \eqref{main_equa}  as follows: find $\potf \in W^1_0(\R^3)$ such that
\begin{equation}\label{forma_weak}
\forall v \in W^1_0(\R^3), \; \int_{\R^3} \grad \potf . \grad v dx = \int_{\omm} \Mg.\grad v dx. 
\end{equation}
The existence and uniqueness of solutions is a direct consequence of the Lax-Milgram theorem. 
Estimate \eqref{estima_sol_ord1} results from the use of Cauchy-Schwarz inequality on the right when $v=u$ in \eqref{forma_weak}. 
Combining with Hardy inequality \eqref{hardy_inequa_R3} gives \eqref{estima_sol_ord0}.  \\
We also have the following lemma (see \cite{ararboulmezaoud1} and \cite{arar_boulmez_kk}): 
\begin{lemma}\label{hilb_basis}
The family $(\ww_\alpha)_{\alpha \in \INDS}$ is a Hilbert basis of $W^1_0(\R^2)$ endowed with the norm $|.|_{W^1_0(\R^3)}$. 
\end{lemma}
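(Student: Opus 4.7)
The plan is to establish the two defining properties of a Hilbert basis: orthonormality (after rescaling) and totality (maximality of the orthogonal family).

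Orthogonality with respect to the scalar product $((v,w))_{W^1_0(\R^3)} = \int_{\R^3} \grad v \cdot \grad w\, dx$ is immediate from the identity \eqref{ortho_prop_0_1_bis} already proved in Proposition \ref{form_func_basis}: the rescaled functions $\tilde{\ww}_\alpha = \frac{2}{\sqrt{(2k+1)(2k+3)}}\,\ww_\alpha$ with $\alpha \in \INDS_k$ form an orthonormal family. It thus remains to prove totality, that is, if $v\in W^1_0(\R^3)$ satisfies $((v,\ww_\alpha))_{W^1_0(\R^3)}=0$ for every $\alpha\in\INDS$, then $v=0$.

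The key step is to convert this orthogonality condition into an orthogonality in a weighted $L^2$ space, using the eigenvalue equation \eqref{eigenformula1}. Formally, integration by parts yields
\begin{equation*}
\int_{\R^3} \grad \ww_\alpha \cdot \grad v\, dx \;=\; (2k+1)(2k+3)\int_{\R^3} \frac{\ww_\alpha\, v}{(|\x|^2+1)^2}\, dx.
\end{equation*}
This identity first needs to be established for $v \in \D(\R^3)$ and then extended by density: the right-hand side is continuous in $v \in W^1_0(\R^3)$ because $\ww_\alpha/(|\x|^2+1) \in L^2(\R^3)$ (the polynomial decay of $\ww_\alpha$ from \eqref{decompo_walpha} gives behavior like $|\x|^{-3}$ at infinity), and Hardy's inequality \eqref{hardy_inequa_R3} controls $v/\sqrt{|\x|^2+1}$. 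The hypothesis then gives $\int_{\R^3} \frac{\ww_\alpha\, v}{(|\x|^2+1)^2}\, dx = 0$ for every $\alpha\in\INDS$.

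The final step is to transport this to $\Sd$ via the stereographic projection. Substituting $\x = \stm(\xi)$, using $dx = \frac{(|\x|^2+1)^3}{8}\, dS(\xi)$ and $|\x|^2+1 = 2/(1-\xi_4)$, one obtains
\begin{equation*}
\int_{\R^3} \frac{\ww_\alpha\, v}{(|\x|^2+1)^2}\, dx \;=\; \frac{1}{4}\int_{\Sd} \YY_\alpha(\xi)\, \hat{v}(\xi)\, dS(\xi),
\end{equation*}
where $\hat{v}(\xi) := v(\stm(\xi))(1-\xi_4)^{-1/2}$. A short computation combined with Hardy's inequality \eqref{hardy_inequa_R3} shows $\hat{v}\in L^2(\Sd)$. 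Since $(\YY_\alpha)_{\alpha\in\INDS}$ is a Hilbert basis of $L^2(\Sd)$, we conclude $\hat{v}=0$, hence $v=0$ a.e. in $\R^3$. The main technical obstacle is the density/integration-by-parts justification: one needs to confirm that boundary terms at infinity vanish, which can be handled by a cutoff argument exploiting the decay estimate \eqref{prop_asympt_in_wei} for $W^1_0$-functions together with the explicit decay of $\ww_\alpha$ and $\grad \ww_\alpha$.
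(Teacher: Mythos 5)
Your proof is correct; the paper does not actually prove this lemma (it only points to the references where the functions $(\ww_\alpha)$ were introduced), and your argument is precisely the one underlying that construction: orthogonality from \eqref{ortho_prop_0_1_bis}, reduction of totality to a weighted $L^2$ orthogonality via the eigenvalue relation \eqref{eigenformula1}, and transport through the stereographic projection to the completeness of the spherical harmonics $(\YY_\alpha)_{\alpha\in\INDS}$ in $L^2(\Sd)$. The technical points you single out --- the density of $\D(\R^3)$ in $W^1_0(\R^3)$ needed to justify the integration by parts, and the membership $\hat v\in L^2(\Sd)$ obtained from Hardy's inequality \eqref{hardy_inequa_R3} --- are exactly the ones requiring care, and your handling of both is sound.
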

Thus,
$$
\begin{array}{rcl}
\potf & =& \dps \sum_{\alpha \in \INDS}   \frac{  ((\potf, \ww_{\alpha} ))_{W^1_0(\R^3)}  }{  (( \ww_{\alpha},  \ww_{\alpha} ))_{W^1_0(\R^3)}   }    \ww_{\alpha}, \\
& =& \dps \sum_{k=0}^{+\infty}   \frac{ 4   }{   (2k+1)(2k+3) } \sum_{\alpha \in \INDS_k} (\grad \potf,  \grad \ww_{\alpha} )_{L^2(\R^3)^3}  \ww_{\alpha}.
\end{array}
$$
Combining with \eqref{forma_weak}  and  \eqref{ortho_prop_0_1_bis} gives \eqref{u_expression_th}. Since convergence
of the right-hand side holds in $W^1_0(\R^3)$ we also get \eqref{formula_nrg_th}.  The reader can refer to \cite{KBoul_SF_IFEM} for estimate \eqref{estima_sol_ord11}. Hence, $\potf \in W^1_1(\R^3)$ and  \eqref{prop_asympt_in_wei} holds true with $\ell= m=1$. This gives
\eqref{lim_infini_U}. 
\end{proof}

\section{The second main result: a new method for calculating the stray-field energy}\label{NumMeth}
The main purpose of this section is to show that from the two formulas  \eqref{u_expression_th} and \eqref{formula_nrg_th} 
 results  a very efficient and easy to implement numerical method for calculating the stray field energy.
This numerical method could be seen as a spectral method in an unbounded domain. However,  unlike the usual spectral methods in a bounded domain and which use polynomial functions or trigonometric functions, here we use (quasi)-rational functions
   guaranteeing a decay of the solution at large distances. Indeed, in view of  Proposition \ref{form_func_basis}, the functions $(\ww_\alpha)_\alpha$ are rationals up to a multiplicative factor. 
\subsection{The method}\label{sect_meth}
In view of \eqref{formula_nrg_th}, the energy $\EEST(\potf) $ can be reasonably approximated by truncating the sum. For this end,
we set for each $N \geq 1$
\begin{equation}\label{discrete_N_energy}
\EEST^N(\potf) =   \sum_{k=0}^N\sum_{\alpha \in \INDS_k}  \frac{2 \mu_0}{(2k+1)(2k+3)} \left(\int_{\omm}\Mg.\grad \ww_{\alpha} dx\right)^2. 
\end{equation}
We observe that 
\begin{equation}
\EEST^N(\potf) =\frac{\mu_0}{2} \int_{\R^3} |\grad \potf_N|^2 dx, 
\end{equation}
where 
\begin{equation}\label{uN_expression}
\potf_N  = \sum_{\ell=0}^N    \sum_{\alpha \in \INDS_k} \frac{4}{(2k+1)(2k+3)}  \left(\int_{\omm}\Mg.\grad \ww_{\alpha} dx\right)   \ww_{\alpha}.
\end{equation}
Let us give another interpretation of $\potf_N$. Define the family of finite dimensional spaces  $(H_N)_{N \geq 0}$ as follows: for $N \geq 0$, $H_N$ 
is the space of functions of the form
\begin{equation}\label{pseudoform2}
v(\x) = \sum_{k=0}^N \frac{p_k(\x)}{(|\x|^2+1)^{k+1/2}}, \; \x \in \Rd, 
\end{equation}
where, for each $k \leq N$, $p_k$ is a polynomial  of degree less than or equal to $k$.   Obviously, 
\begin{equation}
H_0 \subset H_1 \subset H_2 \subset \cdots \subset H_N \subset \cdots
\end{equation}
The following inclusion holds for $N \geq 0$: 
\begin{equation}
H_N \inclus W^1_0(\R^3). 
\end{equation}
It can be easily proved that (see, e. g., \cite{ararboulmezaoud1})
\begin{equation}
\dimen  H_N =
\combi{3+N}{3} + \combi{N+2}{3}  =   \frac{(N+1)(N+2)(2N+3)}{6}. 
\end{equation}
Since 
$$
|\INDS_k| = (k + 1)^2 \mbox{ for } k \geq 0
$$
($|\INDS_k|$ designates the cardinal of the set $\INDS_k$), 
we deduce the identity 
\begin{equation}\label{dimeHNequa}
\dimen  H_N = \sum_{k=0}^N |\INDS_k| = |\INDSS_N|. 
\end{equation}
On the other hand, in view of Proposition \ref{form_func_basis}, we have for all $N \geq 0$,
$$
(\alpha \in \INDS_\ell \mbox{ for some } \ell \leq N) \Longrightarrow \ww_{\alpha} \in H_N. 
$$
In other words, 
$$
 \{ \ww_{\alpha}; \; \alpha \in  \INDSS_N \} \subset H_N. 
$$
Combining the latter  with \eqref{dimeHNequa} and with orthogonality properties \eqref{ortho_prop_0_1} and  \eqref{ortho_prop_0_1_bis} gives 
\begin{lemma}
For all $N \geq 1$, the family $\dps{ (\ww_{\alpha})_{\alpha \in \INDSS_N}}$ is  a basis of $H_N$. 
\end{lemma}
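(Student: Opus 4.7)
The plan is to use a linear-algebra style dimension-counting argument, leveraging the three ingredients that have already been established just before the statement: (i) the inclusion $\{\ww_{\alpha} : \alpha \in \INDSS_N\} \subset H_N$, coming from the decomposition \eqref{decompo_walpha} of Proposition \ref{form_func_basis}; (ii) the dimension identity $\dim H_N = |\INDSS_N|$ from \eqref{dimeHNequa}; and (iii) the orthogonality relations \eqref{ortho_prop_0_1}--\eqref{ortho_prop_0_1_bis}.

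First, I would establish linear independence of the family $(\ww_{\alpha})_{\alpha \in \INDSS_N}$ inside $H_N$. Suppose $\sum_{\alpha \in \INDSS_N} c_\alpha \ww_\alpha = 0$ for scalars $c_\alpha$. Taking the $W^1_0(\R^3)$ scalar product with any $\ww_\beta$, $\beta \in \INDSS_N$, and applying \eqref{ortho_prop_0_1_bis} immediately gives $c_\beta \cdot \tfrac{(2k+1)(2k+3)}{4} = 0$ where $\beta \in \INDS_k$, hence $c_\beta = 0$. (Equivalently, one could use the orthogonality \eqref{ortho_prop_0_1} with respect to the weighted $L^2$-inner product, which is perhaps more natural since the sum is a finite one of locally integrable functions.)

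Then the family $(\ww_{\alpha})_{\alpha \in \INDSS_N}$ consists of $|\INDSS_N|$ linearly independent elements of the finite-dimensional space $H_N$, whose dimension is exactly $|\INDSS_N|$ by \eqref{dimeHNequa}. A linearly independent family of maximal cardinality in a finite-dimensional vector space is automatically a basis, which concludes the proof.

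The argument is essentially a one-liner once all the preparatory identities are in hand; there is no genuine obstacle. The only point requiring mild care is that the series appearing in expansions such as \eqref{u_expression_th} are infinite, whereas here we work with a finite sum, so convergence issues do not arise and the orthogonality relation can be applied term-by-term without justification. The key conceptual content is entirely carried by the decomposition \eqref{decompo_walpha} (which places the $\ww_\alpha$ inside $H_N$) together with the dimension computation \eqref{dimeHNequa}.
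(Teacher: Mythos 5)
Your proof is correct and follows essentially the same route as the paper, which obtains the lemma by combining the inclusion $\{\ww_{\alpha};\; \alpha \in \INDSS_N\} \subset H_N$ (from the decomposition \eqref{decompo_walpha}), the dimension count \eqref{dimeHNequa}, and the orthogonality relations \eqref{ortho_prop_0_1}--\eqref{ortho_prop_0_1_bis}. The only difference is that you spell out the linear-independence step explicitly, which the paper leaves implicit.
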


Now, we state this
\begin{proposition}
The function $\potf_N$ given by formula \eqref{uN_expression} is also the unique solution of the well-posed discrete problem
\begin{equation}\label{discretepb1}
\forall v_N \in H_N,  \; \; \int_{\R^3} \grad \potf_N .\grad v_N dx  =\int_{\omm} \Mg. \grad v_N dx. 
\end{equation}
In addition, $\potf_N$ is the projection of $\potf$ on $H_N$ with respect to the inner product $((.,.))_{W^1_0(\R^3)}$. 
\end{proposition}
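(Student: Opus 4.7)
The plan is straightforward and relies essentially on the Hilbert basis structure supplied by Proposition \ref{form_func_basis} together with the continuous weak formulation \eqref{forma_weak}.

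First I would establish well-posedness of \eqref{discretepb1}. Since $H_N \inclus W^1_0(\R^3)$ is a finite-dimensional subspace, the bilinear form $(v,w) \mapsto \int_{\R^3} \grad v \cdot \grad w\, dx$ is the restriction of the inner product $((\cdot,\cdot))_{W^1_0(\R^3)}$ to $H_N \times H_N$, hence continuous and coercive. The linear form $v_N \mapsto \int_\omm \Mg \cdot \grad v_N \, dx$ is continuous on $W^1_0(\R^3)$ by Cauchy--Schwarz (using assumption \hypo{2}), so the Lax--Milgram theorem furnishes existence and uniqueness of a solution to \eqref{discretepb1}.

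Next I would verify that the function $\potf_N$ defined by \eqref{uN_expression} solves \eqref{discretepb1}. By the lemma immediately preceding the statement, $(\ww_\alpha)_{\alpha \in \INDSS_N}$ is a basis of $H_N$, so it suffices to test against $v_N = \ww_\beta$ for an arbitrary $\beta \in \INDS_j$ with $j \leq N$. Using the orthogonality identity \eqref{ortho_prop_0_1_bis},
\begin{equation*}
\int_{\R^3} \grad \potf_N \cdot \grad \ww_\beta \, dx = \sum_{k=0}^N \sum_{\alpha \in \INDS_k} \frac{4}{(2k+1)(2k+3)} \left( \int_\omm \Mg \cdot \grad \ww_\alpha \, dx \right) \cdot \frac{(2k+1)(2k+3)}{4} \delta_{\alpha,\beta},
\end{equation*}
which collapses to $\int_\omm \Mg \cdot \grad \ww_\beta \, dx$, matching the right-hand side of \eqref{discretepb1}. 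By linearity this identity extends to every $v_N \in H_N$, so $\potf_N$ is the (unique) solution.

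For the projection statement, I would simply subtract \eqref{discretepb1} from the continuous weak formulation \eqref{forma_weak} restricted to test functions $v = v_N \in H_N \subset W^1_0(\R^3)$. This yields
\begin{equation*}
((\potf - \potf_N, v_N))_{W^1_0(\R^3)} = \int_{\R^3} \grad(\potf - \potf_N) \cdot \grad v_N \, dx = 0 \quad \text{for all } v_N \in H_N,
\end{equation*}
which is precisely the characterization of $\potf_N$ as the orthogonal projection of $\potf$ onto the closed subspace $H_N$ for the scalar product $((\cdot,\cdot))_{W^1_0(\R^3)}$. Since every step is either a direct appeal to Lax--Milgram, to the explicit orthogonality of the Arar--Boulmezaoud basis, or to the variational characterization of projections, I do not anticipate any real obstacle; the only point requiring mild care is ensuring that the telescoping between the coefficients $\tfrac{4}{(2k+1)(2k+3)}$ in \eqref{uN_expression} and the eigenvalue $\tfrac{(2k+1)(2k+3)}{4}$ in \eqref{ortho_prop_0_1_bis} is carried out correctly.
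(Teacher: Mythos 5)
Your proof is correct and follows essentially the same route the paper takes: the paper leaves the proposition without a formal proof, instead observing immediately afterwards that in the basis $(\ww_\alpha)_{\alpha\in\INDSS_N}$ the discrete problem \eqref{discretepb1} becomes a diagonal linear system thanks to \eqref{ortho_prop_0_1_bis}, which is exactly the computation you carry out explicitly, together with the standard Galerkin-orthogonality argument for the projection claim. No gaps.
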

One could therefore consider that the approximation  \eqref{uN_expression} is none other than the solution of 
the discrete problem \eqref{discretepb1}  which consists to approximate the original problem \eqref{main_equa} 
by a spectral method  using the functions of $H_N$.  The use of the family 
$(\ww_\alpha)_{\alpha \in \INDSS_N}$  as a  basis of $H_N$  reduces  the discrete problem \eqref{discretepb1} 
to a simple {\it diagonal linear system} 
\begin{equation}
 D X = B
 \end{equation}
 with $D$ the diagonal matrix
$$
D = \diag(\frac{4}{3}, \underbrace{\frac{4}{15}, \cdots,  \frac{4}{15}}_{ \text{4 coefficients} }, \cdots,  \underbrace{\frac{4}{4(N+1)^2-1}, \cdots,  \frac{4}{4(N+1)^2-1}}_{\text{$(N+1)^2$ coefficients}}).
$$
Here $X$ contains the components of  $\potf_N$ with respect to the basis   $(\ww_\alpha)_{\alpha \in \INDSS_N}$
and $B$ covers the integrals $\dps{\int_{\omm} \Mg. \grad \ww_\alpha dx.}$ Thus, solution of \eqref{discretepb1}
is obviously given by formula \eqref{u_expression_th}. This is a significant observation which  demonstrates 
the benefits of using functions $(\ww_\alpha)_{\alpha}$. \\

\subsection{Convergence of the method and error estimate} $\;$\\
Focus now is on convergence when $N \to +\infty$. We have: 
\begin{lemma}\label{third_main_th}
Assume that \hypo{1} and \hypo{2} are fullfilled. Then,
\begin{equation}\label{energy_gradnrm}
 \EEST(\potf) - \EEST(\potf_N) = \frac{\mu_0}{2}  |\potf  -\potf_N|^2_{W^1_0(\R^3)}. 
\end{equation}
and 
\begin{equation}\label{convergence_lim}
\lim_{N \to +\infty} \EEST^N(\potf_N)  = \EEST(\potf).  
\end{equation}
\end{lemma}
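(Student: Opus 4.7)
The plan is to exploit the orthogonal projection property of $\potf_N$ established in the preceding proposition: $\potf_N$ is the orthogonal projection of $\potf$ onto the finite dimensional subspace $H_N$ of $W^1_0(\R^3)$ with respect to the inner product $((\cdot,\cdot))_{W^1_0(\R^3)}$. This immediately places the argument inside a standard Hilbert-space framework where Pythagoras and Bessel-Parseval do all of the work.

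For the identity \eqref{energy_gradnrm}, I would invoke the Pythagorean theorem. Since $\potf_N \in H_N$ and $\potf - \potf_N$ is orthogonal to $H_N$ (in particular to $\potf_N$) in the $W^1_0(\R^3)$ inner product, we obtain
\begin{equation*}
|\potf|^2_{W^1_0(\R^3)} \;=\; |\potf_N|^2_{W^1_0(\R^3)} + |\potf - \potf_N|^2_{W^1_0(\R^3)}.
\end{equation*}
Multiplying by $\mu_0/2$ and recalling that, by \eqref{stren20}, $\EEST(\potf) = \tfrac{\mu_0}{2}|\potf|^2_{W^1_0(\R^3)}$ and $\EEST(\potf_N) = \tfrac{\mu_0}{2}|\potf_N|^2_{W^1_0(\R^3)}$, the identity \eqref{energy_gradnrm} follows at once.

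For the limit \eqref{convergence_lim}, I would appeal to Lemma \ref{hilb_basis}: since $(\ww_\alpha)_{\alpha \in \INDS}$ is a Hilbert basis of $W^1_0(\R^3)$, the comparison of \eqref{u_expression_th} and \eqref{uN_expression} (using the orthogonality normalization \eqref{ortho_prop_0_1_bis}) shows that $\potf_N$ is exactly the partial sum of the Fourier-type expansion of $\potf$. Consequently $\potf_N \to \potf$ in $W^1_0(\R^3)$, i.e. $|\potf - \potf_N|_{W^1_0(\R^3)} \to 0$ as $N \to \infty$. Plugging this into \eqref{energy_gradnrm} yields $\EEST(\potf) - \EEST(\potf_N) \to 0$, and since $\EEST^N(\potf_N) = \tfrac{\mu_0}{2}\int_{\R^3}|\grad \potf_N|^2 dx = \EEST(\potf_N)$ by \eqref{discrete_N_energy} and the orthogonality \eqref{ortho_prop_0_1_bis}, we obtain \eqref{convergence_lim}.

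There is no significant obstacle here: the heavy lifting was done in Lemma \ref{hilb_basis} (completeness of the Arar-Boulmezaoud family) and in the characterization of $\potf_N$ as an orthogonal projection. The only subtlety is the bookkeeping matching of the notations $\EEST^N(\potf_N)$, $\EEST(\potf_N)$ and $\tfrac{\mu_0}{2}|\potf_N|^2_{W^1_0(\R^3)}$, which reduces to reading off the definitions together with \eqref{ortho_prop_0_1_bis}.
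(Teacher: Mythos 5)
Your proposal is correct and follows essentially the same route as the paper: the identity \eqref{energy_gradnrm} comes from the Galerkin orthogonality $\int_{\R^3}(\grad \potf - \grad \potf_N)\cdot\grad \potf_N\,dx = 0$ (equivalently, Pythagoras for the orthogonal projection onto $H_N$), and \eqref{convergence_lim} from the $W^1_0(\R^3)$-convergence of the series \eqref{u_expression_th}, whose partial sums are exactly the $\potf_N$. The bookkeeping identification $\EEST^N(\potf_N)=\EEST(\potf_N)=\tfrac{\mu_0}{2}|\potf_N|^2_{W^1_0(\R^3)}$ is handled correctly.
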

\begin{proof}
We first observe that \eqref{convergence_lim} is a direct consequence 
of \eqref{u_expression_th}. Indeed, in view of \eqref{forma_weak} and  \eqref{discretepb1} we get 
$$
\int_{\R^3} (\grad \potf -  \grad \potf_N). \grad \potf_N dx=0,
$$
and \eqref{energy_gradnrm}  follows immediately. 
\end{proof}

\begin{theorem}\label{conver_theorem}
Assume that \hypo{1} and \hypo{2} are fullfilled. Assume also  that $\omm$ is bounded, $\div \Mg \in L^2(\omm)$ and  $\Mg . \n = 0$ on $\pt \omm$. Then, $\potf \in W^{2}_{2}(\R^3)$ and there exists a
constant $C_1$ depending only on $\omm$ such that 
\begin{eqnarray}
\|\potf - \potf_N\|_{W^1_0(\R^3)} &\leq&  \frac{C_1}{N}\|\div \Mg\|_{ L^2(\omm)},   \label{estima_norm_k1}\\
0 \leq \EEST(\potf) - \EEST(\potf_N) &\leq&  \frac{C^2_1}{N^2} \|\div \Mg\|^2_{ L^{2}(\omm)}. \label{estima_nrj_conv1}
\end{eqnarray}
If in addition $\div \Mg \in H_0^{k-1}(\omm)$ for 
some integer $k \geq 2$ and  if 
\begin{equation}\label{cond_exist_M}
\int_{\omm} \Mg . \grad q dx = 0 \mbox{ for all } q \in \PD_{k-1}, 
\end{equation}
then  $\potf \in W^{k+1}_{2k}(\R^3)$ and there exists a constant
$C_k$ depending only on $k$ and $\omm$ such that 
\begin{eqnarray} 
\|\potf - \potf_N\|_{W^1_0(\R^3)} &\leq&  \dps{ C_k N^{-k} \|\div \Mg\|^2_{{H}^{k-1}(\omm)}}, \label{estima_norm_conv} \\
0 \leq \EEST(\potf) - \EEST(\potf_N) &\leq& C^2_k N^{-2k} \|\div \Mg\|^2_{ {H}^{k-1}(\omm)} \label{estima_nrj_conv2}
\end{eqnarray}
\end{theorem}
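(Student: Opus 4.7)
My plan is to exploit the eigenfunction structure of $\{\ww_\alpha\}$ to turn regularity of $\potf$ into decay of its Fourier coefficients, then to truncate. The starting point is to renormalize: set $\tilde\ww_\alpha = (2/\sqrt{\lambda_k})\ww_\alpha$ with $\lambda_k=(2k+1)(2k+3)$, so that $(\tilde\ww_\alpha)$ is orthonormal in $W^1_0(\R^3)$ by \eqref{ortho_prop_0_1_bis}. Writing $\potf=\sum\hat c_\alpha\tilde\ww_\alpha$, the weak form \eqref{forma_weak} together with the assumption $\Mg\cdot\n=0$ gives
\begin{equation*}
\hat c_\alpha=\int_\omm\Mg\cdot\grad\tilde\ww_\alpha\,dx=-\int_\omm g\,\tilde\ww_\alpha\,dx,\qquad g:=\div\Mg,
\end{equation*}
so that $|\potf-\potf_N|^2_{W^1_0(\R^3)}=\sum_{k>N}\sum_{\alpha\in\INDS_k}|\hat c_\alpha|^2$. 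The goal is thus to bound this tail in terms of $\|g\|_{H^{k-1}(\omm)}$.

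For the case $k=1$ in \eqref{estima_norm_k1}, I would use a second orthogonality. Combining \eqref{eigenformula1} with integration by parts on the Green identity shows (since $\lambda_k\neq\lambda_{k'}$ for $k\neq k'$) that $\int_{\R^3}(1+|\x|^2)^{-1}\ww_\alpha\ww_\beta\,dx=\tfrac14\delta_{\alpha\beta}$, so $(2\ww_\alpha)$ is orthonormal in $L^2(\R^3;(1+|\x|^2)^{-1}dx)$, and arguments based on completeness of spherical harmonics on $\S^3$ make it a Hilbert basis. Since $\omm$ is bounded, $F:=(1+|\x|^2)g\chi_\omm$ lies in this space with $\|F\|^2=\int_\omm(1+|\x|^2)g^2\,dx\le C(\omm)\|g\|^2_{L^2(\omm)}$, and its Fourier coefficients with respect to $(2\ww_\alpha)$ are $2\int_\omm g\ww_\alpha\,dx=-\sqrt{\lambda_k}\,\hat c_\alpha$. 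Parseval therefore yields the crucial identity
\begin{equation*}
\sum_{k,\,\alpha\in\INDS_k}\lambda_k|\hat c_\alpha|^2=\int_\omm(1+|\x|^2)g^2\,dx\le C(\omm)\|\div\Mg\|^2_{L^2(\omm)}.
\end{equation*}
Since $\lambda_{N+1}\ge c_0(N+1)^2$, comparing the tail with this finite sum gives $|\potf-\potf_N|^2_{W^1_0}\le \lambda_{N+1}^{-1}\,C\|\div\Mg\|^2_{L^2(\omm)}\le C_1^2N^{-2}\|\div\Mg\|^2_{L^2(\omm)}$, which is \eqref{estima_norm_k1}. Then \eqref{estima_nrj_conv1} follows from the identity \eqref{energy_gradnrm} in Lemma \ref{third_main_th}. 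The regularity $\potf\in W^2_2(\R^3)$ is obtained by combining the local $H^2$ regularity of $\Delta\potf=g\chi_\omm\in L^2(\R^3)$ (compactly supported) with the multipole decay of $\potf$ at infinity, a standard weighted regularity result (see \cite{KBoul_SF_IFEM}).

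For the general case $k\ge2$, the same strategy applies with more iterations: the orthogonality assumption \eqref{cond_exist_M} makes the first $k-1$ moments of $g\chi_\omm$ vanish, which improves the far-field decay of $\potf$; combined with $\div\Mg\in H^{k-1}_0(\omm)$ (so its zero extension to $\R^3$ belongs to $H^{k-1}(\R^3)$ and preserves the polynomial weights when multiplied by $(1+|\x|^2)^j$), weighted elliptic regularity results yield $\potf\in W^{k+1}_{2k}(\R^3)$ with $\|\potf\|_{W^{k+1}_{2k}}\le C_k\|\div\Mg\|_{H^{k-1}(\omm)}$. The plan is then to prove the higher-order Parseval-type identity
\begin{equation*}
\sum_{n,\,\alpha\in\INDS_n}\lambda_n^{k}|\hat c_\alpha|^2\le C_k\|\potf\|^2_{W^{k+1}_{2k}(\R^3)}
\end{equation*}
by iterating the eigenvalue equation $-\Delta\ww_\alpha=\lambda_k(1+|\x|^2)^{-2}\ww_\alpha$: each application of the operator $L:=-(1+|\x|^2)\Delta$ to $\potf$ reinterprets $\lambda_n^{j}|\hat c_\alpha|^2$ as the weighted $L^2$ norm of $L^j$ or $L^{j-1}F$, and the successive boundary integrations by parts are legitimate thanks to $H^{k-1}_0$ vanishing traces. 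The tail estimate $|\potf-\potf_N|^2_{W^1_0}\le\lambda_{N+1}^{-k}\cdot C_k\|\div\Mg\|^2_{H^{k-1}}\le C_k^2 N^{-2k}\|\div\Mg\|^2_{H^{k-1}}$ then delivers \eqref{estima_norm_conv}, and \eqref{estima_nrj_conv2} follows from \eqref{energy_gradnrm}.

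The main obstacle I anticipate is this spectral characterization of the weighted Sobolev space $W^{k+1}_{2k}(\R^3)$ via the decay $\sum\lambda_n^k|\hat c_\alpha|^2<\infty$: even with the eigenvalue identity, each integration by parts generates lower-order terms (derivatives of the weight $(1+|\x|^2)$ hitting $g$ or $\ww_\alpha$), so one must track these carefully and absorb them using the hierarchy of inclusions among the $W^m_\ell$ spaces together with Hardy-type inequalities such as \eqref{hardy_inequa_R3}. The weighted elliptic regularity step itself, while technical, is essentially classical once the moment conditions are used to kill the slowly decaying part of $\potf$.
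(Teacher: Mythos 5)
Your route is genuinely different from the paper's, and for the first estimate it is correct and arguably more elementary. The paper proves \eqref{estima_norm_k1} by first establishing $\potf\in W^2_2(\R^3)$ with $\|\potf\|_{W^2_2}\leq C\|\div\Mg\|_{L^2(\omm)}$ via the isomorphism theorem of Amrouche--Giroire--Girault for the Laplacian between weighted spaces (Lemma \ref{giroireLap}), and then invokes the Jackson-type approximation lemma of \cite{arar_boulmez_kk}, namely $\|\grad v-\grad(\pi_N v)\|_{L^2}\leq C_k^\star N^{-k}\|v\|_{W^{k+1}_{2k}}$. You instead bound the tail $\sum_{k>N}\sum_{\alpha\in\INDS_k}|\hat c_\alpha|^2$ directly by a Bessel inequality in the weighted $L^2$ space where the $\ww_\alpha$ are orthogonal: since $\hat c_\alpha=-(2/\sqrt{\lambda_k})\int_\omm g\,\ww_\alpha\,dx$ and $\omm$ is bounded, orthonormality alone gives $\sum_\alpha\lambda_k|\hat c_\alpha|^2\leq C(\omm)\|g\|^2_{L^2(\omm)}$, and dividing the tail by $\lambda_{N+1}\geq 4N^2$ yields \eqref{estima_norm_k1} without any weighted regularity theory (note you do not even need completeness of the basis in the weighted space, only Bessel). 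Two slips to fix: the orthogonality relation \eqref{ortho_prop_0_1} carries the weight $(1+|\x|^2)^{-2}$, not $(1+|\x|^2)^{-1}$, so you should take $F=(1+|\x|^2)^2 g\chi_\omm$ (harmless since $\omm$ is bounded); likewise the natural iterated operator is $-(1+|\x|^2)^2\Delta$, not $-(1+|\x|^2)\Delta$. What your approach buys is a self-contained, citation-free proof of the order-one estimate; what it costs appears at the next order.

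For $k\geq 2$ your proposal stops short of a proof. The "higher-order Parseval identity" $\sum_n\lambda_n^k|\hat c_\alpha|^2\leq C_k\|\potf\|^2_{W^{k+1}_{2k}}$ that you flag as the main obstacle is precisely the spectral characterization underlying the approximation lemma that the paper imports from \cite{arar_boulmez_kk}; the iterated integrations by parts do generate lower-order commutator terms from the weight, and absorbing them is the whole content of that lemma, not a routine verification. Similarly, the regularity statement $\potf\in W^{k+1}_{2k}(\R^3)$ together with the bound $\|\potf\|_{W^{k+1}_{2k}}\leq C_k\|\div\Mg\|_{H^{k-1}(\omm)}$ is not "essentially classical" local regularity plus decay: it is exactly Lemma \ref{giroireLap}, and the moment conditions \eqref{cond_exist_M} enter because the target space of that isomorphism is the orthogonal of the harmonic polynomials $\PD_{k-1}$ (your Green's formula computation showing $\int_{\R^3}\widetilde{\div\Mg}\,q\,dx=0$ is the right observation, but you must then invoke the isomorphism rather than assert the conclusion). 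So either complete the iteration argument in detail or cite these two results as the paper does; as written, the case $k\geq 2$ is an outline rather than a proof.
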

Here, the usuel Sobolev space  $H_0^{k-1}(\omm)$ designates  the closure of ${\mathscr C}^\infty_0(\omm)$ in 
the usual Sobolev space $H^{k-1}(\omm)$. 
\begin{proof}
Firstly, we adopt the following notation: given a function $f$ defined over $\omm$, we denote
by $\widetilde{f}$ its extension to $\R^3$ defined as 
$$
\widetilde{f} =
\left\{
\begin{array}{ll}
f & {\rm in }  \; \; \omm,\\
0 & {\rm in } \; \;  \R^3 \backslash \omm. 
\end{array}
\right. 
$$
The following lemma is due to \cite{AmroucheGiroireGirault} (Theorem 6.6): 
\begin{lemma}\label{giroireLap}
Let $m \geq 1$ and $\ell \geq 1$ be two integers. Then, the Laplace operator $\Delta$ defined by
$$
\Delta: \; W^{1+m}_{\ell+m} (\R^3) \to W^{-1+m}_{\ell+m} (\R^3) \perp \PD_{\ell-1}, 
$$
is an isomorphism. Here  $ \PD_{\ell-1} = \{ p \in  \PP_{\ell-1}  \bve \Delta p = 0 \}$ and 
$$
W^{-1+m}_{\ell+m} (\R^3) \perp \PD_{\ell-1} = \{ f \in W^{-1+m}_{\ell+m} (\R^3) \bve \int_{\R^3}  f q dx = 0  \mbox{ for all }  q \in \PD_{\ell-1}  \}. 
$$
\end{lemma}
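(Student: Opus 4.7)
I would establish that $\Delta$ is a topological isomorphism by proving continuity, injectivity, and surjectivity onto the orthogonal complement $W^{-1+m}_{\ell+m}(\R^3) \perp \PD_{\ell-1}$ separately, then invoking the Banach open mapping theorem. Continuity is essentially built into the definitions: for $u \in W^{1+m}_{\ell+m}(\R^3)$, each derivative $D^\lambda u$ with $|\lambda| \leq 1+m$ is square-integrable against the weight $(1+|\x|^2)^{|\lambda|+\ell-1}$, and specialising to $|\lambda| \geq 2$ yields exactly the weights needed by $\Delta u$ (and its derivatives of order up to $m-1$) in the target space $W^{-1+m}_{\ell+m}(\R^3)$.

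For injectivity, any $u$ in the kernel is entire harmonic, hence real-analytic, while the weighted inclusion together with \eqref{prop_asympt_in_wei} forces $\|u(|\x|, \cdot)\|_{L^2(\St)} = o(|\x|^{-\ell-1/2})$ at infinity; since a non-zero harmonic polynomial of degree $k$ grows as $|\x|^k$, no such polynomial is compatible with this decay when $\ell \geq 1$, so $u \equiv 0$. The orthogonality to $\PD_{\ell-1}$ on the target side is a dual necessary condition: for $q \in \PD_{\ell-1}$, Green's formula on a ball $B_R$ gives $\int_{B_R} f\, q\, dx = \int_{\pt B_R}(q\, \pt_n u - u\, \pt_n q)\, dS$, and the surface terms vanish as $R \to \infty$ by the decay of $u$ and $\grad u$ in the weighted space balanced against the polynomial growth $|q| + R|\grad q| = O(R^{\ell-1})$.

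For surjectivity, given $f \in W^{-1+m}_{\ell+m}(\R^3)$ orthogonal to $\PD_{\ell-1}$, the natural candidate is the Newton potential $u(\x) = -\frac{1}{4\pi}\int_{\R^3}\frac{f(\y)}{|\x-\y|}\,d\y$, which yields $\Delta u = f$ in the distributional sense. Expanding $|\x-\y|^{-1}$ as a series of solid spherical harmonics and using the orthogonality of $f$ against $\PD_{\ell-1}$ shows that the multipole expansion of $u$ at infinity begins only at order $\ell$, so $u(\x) = O(|\x|^{-\ell-1})$ with the corresponding decay of its derivatives (losing one power of $|\x|$ per derivative). Combined with interior elliptic regularity, these bounds place $u$ in $W^{1+m}_{\ell+m}(\R^3)$ with continuous dependence on $f$; the open mapping theorem then promotes the continuous bijection to a topological isomorphism.

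The chief technical obstacle will be turning the pointwise multipole decay into the quantitative weighted Sobolev estimates matching the target norm up to derivatives of order $1+m$, while simultaneously controlling the near-field behaviour of the convolution when $f$ lies only in a negative-order weighted Sobolev space. The natural remedy is to split the integral into near-field and far-field pieces, using standard elliptic regularity on any bounded region and tail estimates encoded by the compatibility conditions at infinity; it is this decomposition that justifies the restrictions $m \geq 1$, $\ell \geq 1$ imposed in the statement.
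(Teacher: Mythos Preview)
The paper does not prove this lemma: it is simply cited as Theorem~6.6 of Amrouche--Girault--Giroire, so there is no ``paper's own proof'' to compare against. Your proposal is therefore not a reconstruction but an independent attempt at a full proof of a result the authors import wholesale.

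As a sketch, your strategy is the standard one for such isomorphism theorems in weighted Sobolev spaces, and the outline is sound: continuity is immediate from the weight bookkeeping; injectivity follows from Liouville (an entire harmonic function lying in $W^{1+m}_{\ell+m}(\R^3)$ with $\ell\geq 1$ decays at infinity, hence vanishes); the range is contained in the orthogonal complement of $\PD_{\ell-1}$ by the Green's-formula argument you describe; and for surjectivity the Newton potential together with the multipole expansion --- with the first $\ell$ multipoles killed by the orthogonality hypothesis --- produces the candidate with the correct $O(|\x|^{-\ell-1})$ far-field behaviour. Two small points: first, your closing remark about $f$ lying ``only in a negative-order weighted Sobolev space'' is off, since under the hypothesis $m\geq 1$ the target $W^{m-1}_{\ell+m}(\R^3)$ has non-negative differentiability order, so $f$ is at worst merely $L^2$ with weight; second, the step from the pointwise multipole decay to the full $W^{1+m}_{\ell+m}$ norm bound with all derivatives up to order $1+m$ is exactly where the real work in the cited reference lies, and your paragraph acknowledges but does not resolve it. In the context of the present paper, citing the reference is the appropriate move.
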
  
Assume now that $\div \Mg \in L^2(\omm)$ and $\Mg . \n = 0$ on $\pt \omm$.  Then, $\potf$ is solution of the problem
\begin{equation}
\Delta \potf = \widetilde{\div \Mg} \mbox{ in } \R^3. 
\end{equation}
Obviously $ \widetilde{\div \Mg} \in W^0_2(\R^3)$ and 
$$
\int_{\R^3} \widetilde{\div \Mg} dx = \int_{\omm} \div \Mg dx = 0. 
$$
In view of condition \eqref{cond_exist_M} and Lemma \ref{giroireLap}, we deduce that $\potf \in W^{2}_{2}(\R^3)$. 
If in addition   $\div \Mg \in {H}_0^{k-1}(\omm)$
for some $k \geq 1$ and if $\Mg$ satisfies condition \eqref{cond_exist_M} when $k \geq 2$, then
  $\widetilde{\div \Mg} \in W_s^{k-1}(\R^3)$ for any real number $s$ (since 
$\widetilde{\div \Mg}$ vanishes outside $\omm$). In particular $\widetilde{\div \Mg} \in W_{2k}^{k-1}(\R^3)$. By Green's formula
we also have
$$
\forall q \in \PD_{k-1}  \int_{\R^3}\widetilde{\div \Mg} q dx = \int_{\omm}{\div \Mg} q dx  = -  \int_{\omm}{\Mg}.\grad q dx = 0. 
$$
Hence, $\potf \in W^{k+1}_{2k}(\R^3)$, thanks to Lemma \ref{giroireLap}. Moreover, there exists
a constant $C_k$ depending only on $k$ such that
\begin{equation}\label{estima_u_div}
\| \potf \|_{W^{k+1}_{2k}(\R^3)} \leq C_k \| \widetilde{ \div \Mg}\|_{W^{k-1}_{2k}(\R^3)} \leq \widetilde{C}_k  \| \div \Mg\|_{H^{k-1}(\omm)}. 
\end{equation}
Let $\pi_N$ be the orthogonal projector on $H_N$ with respect to the scalar product
associated to the norm $|.|_{W^1_0(\R^3)}$. 
The following result is due to \cite{arar_boulmez_kk}: 
\begin{lemma}
Assume that $v \in W^{k+1}_{2k}(\R^3)$ for some integer $k \geq 0$. Then,
\begin{equation}\label{estima_erro_vpnv}
\|\grad v - \grad (\pi_N v)\|_{L^2(\R^3)^3} \leq C_k^\star  N^{-k} \|v \|_{W^{k+1}_{2k}(\R^3)},
\end{equation}
where $ C_k^\star$ is a constant which depends neither on $N$ nor on $v$. 
\end{lemma}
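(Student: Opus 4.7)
The plan is to exploit the Hilbert basis structure of $(\ww_\alpha)_{\alpha \in \INDS}$ in $W^1_0(\R^3)$ provided by Lemma \ref{hilb_basis} to identify $\pi_N v$ as a truncated Fourier series, and then to reduce the approximation estimate to a spectral decay bound for its coefficients via the weighted Laplacian that diagonalizes on $(\ww_\alpha)$.

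First, by Lemma \ref{hilb_basis}, expand $v = \sum_{k' \geq 0}\sum_{\alpha \in \INDS_{k'}} c_\alpha \ww_\alpha$ in $W^1_0(\R^3)$, with coefficients $c_\alpha = 4\,((v,\ww_\alpha))_{W^1_0(\R^3)}/\mu_{k'}$, where $\mu_{k'} := (2k'+1)(2k'+3)$, extracted from \eqref{ortho_prop_0_1_bis}. Since the lemma stated just above shows that $(\ww_\alpha)_{\alpha \in \INDSS_N}$ is an orthogonal basis of $H_N$ for $((\cdot,\cdot))_{W^1_0(\R^3)}$, the projection $\pi_N v$ coincides with the partial sum over $\INDSS_N$, and the Parseval identity associated with \eqref{ortho_prop_0_1_bis} yields
\begin{equation*}
\|\grad v - \grad (\pi_N v)\|^2_{L^2(\R^3)^3} \;=\; \sum_{k' > N}\sum_{\alpha \in \INDS_{k'}} c_\alpha^2\, \frac{\mu_{k'}}{4}.
\end{equation*}

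Next, introduce the weighted Laplacian $B v := -(|\x|^2+1)^2 \Delta v$. By \eqref{eigenformula1}, $B \ww_\alpha = \mu_{k'}\, \ww_\alpha$; and \eqref{ortho_prop_0_1} shows that $(\ww_\alpha)$ is orthogonal in the weighted space $L^2(\R^3;(|\x|^2+1)^{-2}\,dx)$ as well. Consequently, for every integer $m \geq 0$,
\begin{equation*}
\sum_{k',\alpha} c_\alpha^2\, \frac{\mu_{k'}^{2m}}{4} \;=\; \| B^m v \|^2_{L^2(\R^3;\,(|\x|^2+1)^{-2}dx)},\qquad \sum_{k',\alpha} c_\alpha^2\, \frac{\mu_{k'}^{2m+1}}{4} \;=\; \|\grad (B^m v)\|^2_{L^2(\R^3)^3}.
\end{equation*}
A Leibniz expansion writes $B^m$ as a sum of differential operators of the form $(|\x|^2+1)^j\, \x^{\otimes s}\, D^\lambda$ with $|\lambda| \leq 2m$, whose weight exponents line up so that each resulting integral (after inserting the weight $(|\x|^2+1)^{-2}dx$, or after taking a gradient) is one of the terms defining $\|v\|^2_{W^{k+1}_{2k}(\R^3)}$, upon setting $k+1 = 2m$ or $k+1 = 2m+1$. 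For instance, the leading term $(|\x|^2+1)^{2m}\Delta^m v$ in the even case contributes $\int(|\x|^2+1)^{4m-2}|\Delta^m v|^2\,dx$, which is exactly the $|\lambda|=2m$ integral of $\|v\|^2_{W^{2m}_{4m-2}}$. This yields the matching bound $\sum c_\alpha^2\, \mu_{k'}^{k+1}/4 \leq C\,\|v\|^2_{W^{k+1}_{2k}(\R^3)}$.

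Finally, using $\mu_{k'} \geq \mu_{N+1} \geq c\, N^2$ for $k' > N$ and the factorization $\mu_{k'} = \mu_{k'}^{k+1}/\mu_{k'}^k$,
\begin{equation*}
\|\grad v - \grad(\pi_N v)\|^2_{L^2(\R^3)^3} \;\leq\; \mu_{N+1}^{-k} \sum_{k'>N,\alpha} c_\alpha^2\, \frac{\mu_{k'}^{k+1}}{4} \;\leq\; (C_k^\star)^2\, N^{-2k}\, \|v\|^2_{W^{k+1}_{2k}(\R^3)},
\end{equation*}
which is \eqref{estima_erro_vpnv} after taking square roots. The main obstacle is the Leibniz bookkeeping for $B^m$: each differentiation of $(|\x|^2+1)^2$ lowers the weight power by one but contributes a factor of $\x$ of unit homogeneity, and one must verify that all cross-terms stay within the class of integrals appearing in $\|v\|^2_{W^{k+1}_{2k}(\R^3)}$. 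An alternative and conceptually cleaner route is to pull everything back to $\Sd$ via the stereographic projection $\stm$: after the conformal change of variables $\ww_\alpha$ becomes (up to a factor) the spherical harmonic $\YY_\alpha$, and the estimate reduces to classical best-$N$-term approximation in $H^{k+1}(\Sd)$.
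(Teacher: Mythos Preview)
The paper does not prove this lemma: it simply quotes it as ``due to \cite{arar_boulmez_kk}'' and uses it as a black box inside the proof of Theorem~\ref{conver_theorem}. So there is no proof in the paper to compare against.

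That said, your sketch is the standard and correct spectral-approximation argument and is almost certainly the route taken in the cited reference. The identification of $\pi_N v$ with the truncated expansion, the Parseval identity for the tail, and the eigenvalue relation $B\ww_\alpha=\mu_{k'}\ww_\alpha$ giving $B^m v=\sum_\alpha c_\alpha\mu_{k'}^m\ww_\alpha$ are all fine; your verification that the leading weight exponents in $\|B^m v\|_{L^2((|\x|^2+1)^{-2}dx)}$ and $\|\grad(B^m v)\|_{L^2}$ match those of $\|v\|_{W^{k+1}_{2k}}$ is correct. The only part left genuinely open is the Leibniz bookkeeping for the lower-order terms of $B^m$, which you flag honestly; it does go through, since each commutator of $\Delta$ with a power of $(|\x|^2+1)$ drops the weight by exactly the amount compensated by the polynomial factor in $\x$, so every term lands in the right weighted $L^2$ class. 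A density argument (smooth compactly supported functions are dense in $W^{k+1}_{2k}(\R^3)$) is needed to justify the integrations by parts linking $((Bv,\ww_\alpha))$ to $\mu_{k'}((v,\ww_\alpha))$.

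Your closing remark is apt: transporting everything to $\Sd$ via $\stm$ turns $(\ww_\alpha)$ into spherical harmonics and $B$ into (a shift of) the Laplace--Beltrami operator, whence \eqref{estima_erro_vpnv} becomes the classical approximation estimate in $H^{k+1}(\Sd)$. This is very likely how \cite{arar_boulmez_kk} actually argues, and it sidesteps the Leibniz computation entirely.
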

We know that $\potf_N = \pi_N U$. The  inequalities  \eqref{estima_norm_k1} and \eqref{estima_nrj_conv1} 
result from  \eqref{estima_erro_vpnv} and \eqref{estima_u_div} with $k=1$ and from  \eqref{hardy_inequa_R3}. 
 The  inequalities  \eqref{estima_norm_conv}  and \eqref{estima_nrj_conv2} are deduced in a similar way. 

%
%
\end{proof}
\begin{remark}\label{regularity_rem} 
Assumption $\Mg . \n = 0$ on $\pt \omm$ means that the effective magnetic charges are zero. One can easily see
 that if  $\Mg . \n \neq 0$ on $\pt \omm$   then $\potf$  does not belong to $W^{2}_{2}(\R^3)$. Indeed, equation
 \eqref{main_equa}  can be rewritten as 
$$
\left\{
\begin{array}{rcll}
\Delta u &= & \div \Mg &\mbox{ in } \omm,  \\
\Delta u &= & 0 & \mbox{ in }\R^3 \backslash \overline{\omm},  \\
\dps{[u]} &=& 0 &  \mbox{ on } \pt \omm, \\
\dps{ \left[ \frac{\pt u}{\pt n} \right] }&=& - \Mg.\n & \mbox{ on } \pt \omm, 
\end{array}
\right.
$$
where $\n$ is the exterior normal on $\pt \omm$.  Thus, $\dps  \left[ \frac{\pt u}{\pt n} \right] \ne 0$  on  $\pt \omm$ and
$\potf \not \in W^{2}_{2}(\R^3)$. 
\end{remark}

\section{Implementation and computational tests}\label{implem_sec}
The first purpose of this section is to examine the numerical results obtained after implementation of the 
method suggested in the previous section and to check whether the theoretical error estimates 
are confirmed numerically and whether they are optimal. Another goal is to give some additional details 
regarding the implementation of the method, including the calculation of integrals. 
It is worth noting at this early stage that despite the three-dimensional nature of the problem, and despite the fact that it is posed in an 
open domain, the implementation of the method remains rather easy and fast.

\subsection{Additional details about gradients of Arar-Boulmezaoud functions}\label{calc_grad_sec} $\;$\\
Formulas in Theorem \ref{main_res1} as well as the approximation method proposed in Section \ref{sect_meth} involve functions  $\dps{ (\ww_{\alpha})_{\alpha}}$  by their gradients, particularly in the integral coefficients
\begin{equation}\label{coeff}
\int_{\omm}\Mg.\grad \ww_{\alpha} dx.
\end{equation}
In practice, during the implementation of the method, 
the precise calculation of these gradients could be of great importance.
It is consequently preferable to compute 
 them by exact analytical expressions and not by discretization of the differentiation operators. 
Of course, one can use a Green's formula in \eqref{coeff} to make these gradients disappear:
$$
\int_{\omm}\Mg.\grad \ww_{\alpha} dx = -\int_{\omm}(\div \Mg)  \ww_{\alpha} dx  + \pds{\Mg.\n}{ \ww_{\alpha}}_{\pt \omm},
$$
($ \pds{.}{.}_{\pt \omm}$ designates the duality pairing between $H^{1/2}(\pt \omm)$ and $H^{-1/2}(\pt \omm)$).  
 However, this requires  a little more regularity on the magnetization vector field $\Mg$ (for example that $\div \Mg \in L^2(\omm)$) and, moreover,  
it makes surface integrals appear. 
It will therefore not be useless to spell out the gradients 
 $\dps{ (\grad \ww_{\alpha})_{\alpha}}$. Actually, in view of \eqref{defunc_Yalpha} and \eqref{express_eigen1}, 
 these gradients are not quite easy to calculate, 
 especially because of the special functions appear in their formulas (that is, Chebyshev polynomials 
 and  associated Legendre functions of Legendre).  \\
 In this paragraph, we deduce simpler and exact expressions to the gradients
 of the functions  $\dps{ (\ww_{\alpha})_{\alpha}}$, in order to facilitate
 the computation of magnetic potential and the stray-field energy by formulas
   \eqref{uN_expression} and \eqref{discrete_N_energy}. \\
$\;$\\
The starting point is the following proposition
\begin{proposition}\label{formula_of_gradient_lem_bis}
For  $\alpha \in \INDS$ and $\x \in \R^3$:
\begin{equation}\label{formula_of_gradient_ww_bis}
\grad \ww_\alpha (\x)  =( 1-\xi_4)^{1/2} ( \VV_\alpha (\xi) - \frac{1}{2} \YY_\alpha(\xi) \hxi), 
\end{equation}
where  $\xi =  (\xi_1, \xi_2, \xi_3, \xi_4) = \pi^{-1}(\x) \in  \S^3$, $\hxi = (\xi_1, \xi_2, \xi_3)$,  $(\phi, \theta, \chi)$ are the spherical coordinates of $\xi$  (see  \eqref{spherical_coordR3}) and 
\begin{equation}\label{form_VValpha}
\VV_\alpha(\xi)  =  (1-\cos \chi)
\left(\!\! 
\begin{array}{cccc}
 -\dps{ \sin  \phi  }    & \dps{ \cos \phi \cos \theta    } &  -\cos \phi \sin \theta \\
  \dps{ \cos \phi  }  & \dps{ \sin \phi \cos \theta  } &   -\sin \phi \sin \theta \\
 0 &-\dps{ \sin \theta  } & - \cos \theta\\
\end{array}
\!\! \right) 
\left(\!\! 
\begin{array}{c}
 \dps{ \frac{1}{\sin \theta \sin\chi}  \frac{\pt  \YY_\alpha}{ \pt  \phi}  (\xi)}  \\
 \dps{  \frac{1}{ \sin\chi}    \frac{\pt  \YY_\alpha}{ \pt  \theta} (\xi) }\\
 \dps{  \frac{\pt  \YY_\alpha}{ \pt  \chi} (\xi)} 
\end{array}
\!\! \right).
\end{equation}
\end{proposition}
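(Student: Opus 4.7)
The plan is as follows. Set $\mu := (1-\xi_4)^{1/2}$ for brevity. Since $\xi = \pi^{-1}(\x)$ has $\xi_4 = (|\x|^2-1)/(|\x|^2+1)$, one has $\mu^2 = 2/(|\x|^2+1)$, so that $\ww_\alpha(\x) = \mu\,(\YY_\alpha\circ\pi^{-1})(\x)$, and $\hxi = \mu^2\x$ is collinear to $\x$. The product rule gives
\[
\grad \ww_\alpha \;=\; \YY_\alpha(\xi)\,\grad\mu \;+\; \mu\,\grad [\YY_\alpha\circ\pi^{-1}],
\]
and the plan is to identify these two summands with $-\tfrac{1}{2}\mu\,\YY_\alpha\,\hxi$ and $\mu\,\VV_\alpha$ respectively.

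First, since $\mu$ depends only on $r=|\x|$, a direct one-variable computation gives $\mu'(r) = -r\mu^3/2$, hence $\grad\mu = -\tfrac{\mu^3}{2}\x = -\tfrac{\mu}{2}\hxi$. This produces exactly the second summand of \eqref{formula_of_gradient_ww_bis}.

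Second, I compute $\grad[\YY_\alpha\circ\pi^{-1}]$ in the standard spherical coordinates $(r,\theta,\phi)$ of $\R^3$. Because $\hxi$ is parallel to $\x$, the angles $(\theta,\phi)$ attached to $\xi$ agree with those of $\x$, while $\chi$ is a function of $r$ alone through $\cos\chi = (r^2-1)/(r^2+1)$. Differentiating one finds $d\chi/dr = -\mu^2 = -(1-\cos\chi)$, and from $\sin\chi = 2r/(r^2+1) = r\mu^2$ one also obtains $1/r = (1-\cos\chi)/\sin\chi$. The standard spherical gradient formula then reads
\[
\grad[\YY_\alpha\circ\pi^{-1}] \;=\; \frac{d\chi}{dr}\partial_\chi \YY_\alpha\,\hat n \;+\; \frac{1}{r}\partial_\theta \YY_\alpha\,\hat\theta \;+\; \frac{1}{r\sin\theta}\partial_\phi \YY_\alpha\,\hat\phi,
\]
where $(\hat n,\hat\theta,\hat\phi)$ is the usual orthonormal spherical frame at $\x$. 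It remains to observe that the three columns of the $3\times 3$ matrix appearing in \eqref{form_VValpha} are precisely the Cartesian expressions of $\hat\phi$, $\hat\theta$ and $-\hat n$. Substituting the values of $d\chi/dr$ and $1/r$ just computed factors a common $(1-\cos\chi)$ out of every row, and multiplying by $\mu$ yields exactly $\mu\,\VV_\alpha$, completing the proof when combined with the first step.

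The calculation is essentially a bookkeeping exercise and presents no conceptual obstacle; the one point requiring care is the sign tracking, namely that the negative sign in $d\chi/dr = -(1-\cos\chi)$ is absorbed by the convention $-\hat n$ in the third column of the matrix, and that the three separate factors $\mu$, $1/r$ and $(1-\cos\chi)/\sin\chi$ telescope to produce the single prefactor $(1-\cos\chi)$ together with the $1/\sin\chi$ coefficients appearing in the explicit expression \eqref{form_VValpha}.
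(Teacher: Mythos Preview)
Your proof is correct and takes a more direct route than the paper's. The paper first invokes an identity from earlier work \cite{ararboulmezaoud1,arar_boulmez_kk} expressing $\grad W$ via the tangential gradient $\grad_{\xi} Y$ on $\S^3$ and a $3\times 4$ rectangular matrix $S(\xi)$, then proves a separate lemma computing $\grad_{\xi} Y$ in spherical coordinates by introducing the $0$-homogeneous extension $F(\y)=Y(\y/|\y|)$, applying Euler's theorem, and inverting a $4\times 4$ linear system; the final formula then emerges from the matrix product $S(\xi)\,\grad_{\xi} Y$. Your argument bypasses all of this by exploiting the key geometric fact that the inverse stereographic projection preserves the angular variables $(\theta,\phi)$ while sending $r=|\x|$ to $\chi$ alone, so that the ordinary spherical gradient formula in $\R^3$ applies directly to $\YY_\alpha\circ\pi^{-1}$. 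This makes your proof self-contained and considerably shorter; the paper's approach, by contrast, separates the differential-geometric content (the chain rule through $\pi^{-1}$, encoded in $S(\xi)$) from the spherical-harmonic content (the tangential gradient on $\S^3$), which is more modular if one later needs either piece in isolation or in a different dimension.
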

By the sake of simplicity, proof of Proposition \ref{formula_of_gradient_lem_bis} is postponed to Appendix A.  \\
$\;$\\
\begin{remark}
In Proposition \ref{formula_of_gradient_lem_bis}, $\frac{\pt  \YY_\alpha}{ \pt  \phi} $, $\frac{\pt  \YY_\alpha}{ \pt  \theta}$
and $\frac{\pt  \YY_\alpha}{ \pt  \chi}$  designate (abusively) the derivatives of $\YY_\alpha$ considered as a function
of $\theta$, $\phi$ and $\chi$. 
\end{remark}
$\;$\\
At this stage, all that remains is the calculation of the partial derivatives 
$$
 \frac{\pt  \YY_\alpha}{ \pt  \phi}, \;   \frac{\pt  \YY_\alpha}{\pt  \theta} \mbox{ and }  \frac{\pt  \YY_\alpha}{ \pt  \chi}. 
$$
In view of formula  \eqref{defunc_Yalpha}, the first two ones can be easily 
expressed in terms of derivatives of spherical harmonics on $\S^2$. For example,
 if $\alpha = (k, \ell, m)$ then
\begin{eqnarray}
\dps{ \frac{\pt \YY_{\alpha}}{\pt \phi} (\xi) }&= &\dps{     \frac{1}{\sqrt{a_{k, \ell}} } (\sin \chi)^\ell T_{k +1}^{(\ell+1)} (\cos \chi) \frac{\pt Y_{\ell, m}}{\pt \phi}  (\phi, \theta) } \nonumber\\
 &= &\dps{   -    \frac{m}{\sqrt{a_{k, \ell}} } (\sin \chi)^\ell T_{k +1}^{(\ell+1)} (\cos \chi)  Y_{\ell, -m}  (\phi, \theta)} 
\end{eqnarray}
In order to avoid division by zero in \eqref{form_VValpha} (when $\sin \theta = 0$), 
which is useless,  one can employ in the definition \eqref{def_YLM}  of $Y_{\ell, -m}$  the  
recurrence property on associated Legendre functions: 
\begin{equation}\label{formu_sansdiv}
2 m  K_{\ell}^m(\cos \theta) =  \sin \theta  \left (\tau_{\ell, m}  K_{\ell+1}^{m+1}(\cos \theta) + \tau_{\ell, -m} K_{\ell+1}^{m-1}(\cos \theta)\right), 
\end{equation}
with 
$$
\tau_{\ell, m} = \sqrt{(\ell+m+2)(\ell+m+1)} \mbox{ for }-\ell \leq m \leq \ell. 
$$
Thus, for $m \ne 0$ we have 
\begin{equation}
\begin{array}{rcl}
\dps{ \frac{2 |m| }{\sin \theta} \frac{\pt \YY_{\alpha}}{\pt \phi} (\xi) } &=&  - \dps{ m    \frac{  \eta_{\ell} }{\sqrt{a_{k, \ell}} }     (\sin \chi)^\ell T_{k +1}^{(\ell+1)} (\cos \chi)  y_{-m}(\phi)  } \\
&& \biggl( \tau_{\ell, |m|}  K_{\ell+1}^{|m|+1}(\cos \theta) + \tau_{\ell, -|m|} K_{\ell+1}^{|m|-1}(\cos \theta)\biggr) \\
\end{array}
\end{equation}
Similarly, we have 
\begin{equation}
\begin{array}{rcl}
 \dps \frac{\pt \YY_{\alpha}}{\pt \phi}(\xi)& = & \dps{  \frac{ \eta_{\ell}}{\sqrt{a_{k, \ell}} } (\sin \chi)^\ell T_{k +1}^{(\ell+1)} (\cos \chi)  }\\
 &&  \dps \biggl(  c_{\ell, -|m|}   K_\ell^{|m|-1}(\cos \theta) - c_{\ell, |m|}  K_\ell^{|m|+1} (\cos \theta) \biggr)   y_m(\phi), 
\end{array}
\end{equation}
where  
\begin{equation}
 c_{\ell, m} = \frac{1}{2} \sqrt{(\ell -m) (\ell+m+1)} \mbox{ for  } \ell \geq 0 \mbox{ and  } -\ell \leq m \leq \ell. 
\end{equation}
Note that we used the following recurrence formula:
\begin{equation}\label{formula1KLM}
(\sin \theta) (K_\ell^m)'(\cos \theta) =  c_{\ell, m}  K_\ell^{m+1} (\cos \theta) -  c_{\ell, -m}   K_\ell^{m-1}(\cos \theta). 
\end{equation}
(and with the convention $K_\ell^{j} = 0$ when $|j| > \ell$). Hence
\begin{equation}
\begin{array}{rcl}
 \dps \frac{\pt \YY_{\alpha}}{\pt \theta}(\xi) &=&  \dps{  \frac{ \eta_{\ell}}{\sqrt{a_{k, \ell}} } (\sin \chi)^\ell T_{k +1}^{(\ell+1)} (\cos \chi)  }\\
&&\dps{   \biggl(  c_{\ell, -|m|}   K_\ell^{|m|-1}(\cos \theta) - c_{\ell, |m|}  K_\ell^{|m|+1} (\cos \theta) \biggr)   y_m(\phi). }
\end{array}
 \end{equation}
 Finally, we also have 
\begin{equation}\label{first_formula_drv_chi3}
\begin{array}{rcl}
\dps{ \frac{\pt \YY_{\alpha}}{\pt \chi} (\xi)} &=& \dps \frac{  (\sin \chi)^{\ell-1}}{\sqrt{a_{k, \ell}} }  Y_{\ell, m} (\phi, \theta), \\
&&   \left( \ell \cos(\chi) T_{k+1}^{(\ell+1)}(\cos\chi)  - (\sin \chi)^{2}  T_{k+1}^{(\ell+2)}(\cos\chi)\right), 
\end{array}
\end{equation}
for all  $\alpha = (k, \ell, m) \in \Lambda$. \\
By using these expressions of  partial derivative of functions $(\YY_{\alpha})$  in  \eqref{formula_of_gradient_ww_bis}, 
we obtain a complete formula which is  readily available for practical use and for implementation.  
\subsection{Computational tests and numerical validation}\label{num_tests_sec} 
In this section, focus is on some numerical results that allow
to assess the practical  usability of the formula \eqref{u_expression_th} and \eqref{formula_nrg_th} and
the performances of the resulting numerical method
outlined in section \ref{mainRes_sect}. Three  different examples are investigated in the following. 
 In the first example we deal with non homogeneously magnetized spherical domain
 for which we have an error estimate by Theorem \ref{conver_theorem}. 
In the two last examples, the domain is homogeneously magnetized.  In all these three cases, 
 we derive expressions of the exact stray field,  to which the numerical 
solution is compared.  In all these computational  tests we set  $\mu_0=1$. 
\subsection*{Example 1: a non homogeneously magnetized sphere with $\Mg.\n = 0$} $\;$\\
We prefer starting numerical experiences with the case of a non homogeneously magnetized spherical sample, that is
 $$\omm =\{\x \in \R^3\bve  |\x| < \RAY\}$$  and 
 \begin{equation}\label{unitMg_ball}
\Mg = (\cos \theta) \e_\varphi + (\sin \theta) \e_\theta \mbox{ in } \omm. 
\end{equation}
It may be noted that $\Mg$ is complying with Heisenberg-Weiss constraint
\eqref{HeiWeiCtr} since $|\Mg| = 1$ in $\omm$. Besides,
$\Mg$ is tangential on the boundary of $\omm$ since
$\Mg . \n  = 0$ on $\pt \omm$ (here $\n[\x) = \x/|\x|$). \\
 We are able to give an analytical expression of the exact solution (see \cite{these_kaliche, KBoul_SF_IFEM}). More precisely, 
 \begin{equation}
\potf(\x) = \left\{
\begin{array}{ll}
\dps{ - \frac{2z}{9} +   \frac{2z}{3} \ln (\frac{|\x|}{ \RAY})}& \mbox{ if } |\x| \leq   \RAY, \;  \\
\dps{ -  \frac{2  \RAY^3 z}{9 |\x|^3} }  & \mbox{ if } |\x| \geq   \RAY.
\end{array}
\right.
\end{equation}
The exact stray-field energy is given by
\begin{equation}
\En_{sf}(\potf) =  \frac{\mu_0}{2} \int_{\R^3} |\grad \potf|^2 dx   = \frac{16}{81} \pi \RAY^3. 
\end{equation}
 Here we choose $\RAY = 1/2$. In Table \ref{table_res1}  we outline the computed stray-field energy \eqref{discrete_N_energy} for 
 several values of $N$ (considered as a discretization parameter).
 We also outline the relative $L^2$ error on the stray field $\Hd = - \grad \potf$ defined by 
$$
e_0(\Hd) = \frac{| \potf_{N}  - \potf|_{W^{1}_{0}(\Rt)}    }{ |\potf|_{W^{1}_{0}(\Rt)}}. 
$$
We can then observe that the error $e_0(\Hd)$  decreases in as $N^{-1.45}$. 
This is in accordance with Proposition \ref{third_main_th} in which it is forecasted that
 $$
 |\potf -\potf_{N}|_{W^{1}_{0}(\Rt)} \leq C N^{-1}  \|\div \Mg \|_{L^2(\omm)}. 
 $$
 Actually, the solution $u$ belongs $W^2_2(\R^3)$ since 
 $\div \Mg \in L^2(\omm)$ and $\Mg.\n=0$ on $\pt \omm$.  
  There is even a superconvergence with respect to this estimate. Note also that the error on the stray field energy decreases as $N^{-2.90}$ (in agreement with the identity  $| \EEST(\potf) - \EEST(\potf_N)| =  |\potf -\potf_{N}|^2_{W^{1}_{0}(\Rt)^{3}}$).

\begin{center}
\begin{table}
\begin{tabular}{|c|cccc|}
  \hline
N  &  $\EEST(u)$ & $\EEST(u_N)$ & $\dps{\frac{|\EEST(u)-\EEST(u_N)|}{\EEST(u)}}$ & $e_0(\Hd)$ \\
  \hline
10 & 0.07757018 & 0.07696625 & 7.78E-3 &  8.82E-2 \\
20 & -                  & 0.07750001  & 9.03E-4 &  3.00E-2 \\
30 & -                  & 0.07754315  & 3.48E-4 & 1.86E-2\\
40 & -                  & 0.07756016  &1.29E-4 &  1.13E-2\\
50 & -                 & 0.07756414  & 7.79E-5 &  8.82E-3  \\
60 &-                  &   0.07756708  &4.00E-5& 6.32E-3 \\
    \hline
 The log. slope& & & -2.90&-1.45  \\
  \hline
\end{tabular} 
  \caption{the exact and the approximate stray-field energy due to a non homogeneously magnetized sphere  (example 1).}\label{table_res1}
\end{table}
\end{center}
\subsection*{Example 2: a homogeneously magnetized sphere} $\;$\\
In this second benchmark test, we consider a spherical sample $\omm =\{\x \in \R^3\bve  |\x| < \RAY\}$ 
with a constant magnetization  $\Mg = \mz$.  It is easy to prove 
that the exact solution of \eqref{main_equa} is given by the formula:
\begin{equation}
\potf(\x) = \left\{
\begin{array}{ll}
\dps{ \frac{1}{3}{ \mz.\x } }& \mbox{ if } |\x| <  \RAY, \;  \\
\dps{\frac{ \RAY^3}{3}  \frac{\mz.\x}{|\x|^3} } & \mbox{ if } |\x| \geq   \RAY.
\end{array}
\right.
\end{equation}
The exact energy is 
\begin{equation}
\EEST(\potf) =  \frac{1}{2} \int_{\R^3} |\grad \potf|^2 dx =  - \frac{1}{2} \int_{\omm} \Mz.\vH dx  =  \frac{ 2 \pi |\mz|^2}{9}  \RAY^3.
\end{equation}
Here, we choose  $\mz = (0, 0, 1)$ and $\RAY = 0.5$. Thus, 
$$
\EEST(u) =   \frac{\pi}{36} = 0.08726646 
$$
It may be observed that  $[\frac{\pt \potf}{\pt n}] = -\mz.\n \ne 0$ on $\pt \omm$. Thus, $ \potf\not \in W^2_2(\Rt)$  although 
 $\potf_{|\omm} \in H^2(\omm)$ and $\potf_{|\Rt \backslash \bomm} \in W_2^2(\Rt \backslash \bomm)$
 (here $\potf_{|\omm}$ and $\potf_{|\Rt \backslash \bomm}$ designate the restrictions
 of $\potf$ to $\omm$ and to $\Rt \backslash \bomm$ respectively).  We are therefore 
 not within the validity assumptions of Theorem \ref{conver_theorem} and  the error estimates 
  \eqref{estima_norm_k1}  and  \eqref{estima_nrj_conv1} are no longer necessarily true. \\  
  In Table \ref{table_res2}, the approximate energy $\EESTH(\potf_N)$ is given for several values of the discretization parameter $N$. 
We also compute the relative $L^2$ error on the stray field $\Hd = - \grad \potf$.  One can observe that
this error decreases as $N^{-0.46}$. The error on the energy decreases as $N^{-0.93}$. \\
 Here again, convergence of the approximate solution to the exact one holds although the  
 normal component of $ \vH = -\grad \potf$  is not continuous  across  the boundary of the sample.
\begin{center}
\begin{table}
\begin{tabular}{|c|cccc|}
  \hline
N  &  $\EEST(u)$ & $\EEST(u_N)$ & $\dps{\frac{|\EEST(u)-\EEST(u_N)|}{\EEST(u)}}$ & $e_0(\Hd)$ \\
  \hline 
10 & 0.08726646 & 0.07845252&10.10E-2&0.3153 \\ 
20 &                     & 0.08252939&5.42E-2&0.2322  \\ 
30 &  -                 &  0.08402011&3.72E-2&0.1924   \\
40 & -                  &  0.08479348&2.83E-2&0.1680  \\
50 &  -                 &  0.08526692&2.29E-2&0.1511  \\
60 & -                  &  0.08558669 &1.92E-2 &0.1385  \\
    \hline
 {The log. slope}& & & -0.93&-0.46  \\
  \hline
\end{tabular} 
  \caption{The exact and the approximate stray-field energy due to an homogeneously magnetized  sphere (example 2).  }\label{table_res2}
\end{table}
\end{center}
\subsection*{Example 3:  homogeneously magnetized  cube. } $\;$\\
In this last test, 
we change the geometry of the sample and we consider a homogeneously magnetized  cubic rod  $\omm = ]-\gamma, \gamma[^3$, with $\gamma=1/2$,
and $\Mg = (0, 1, 0)$.  The stray-field energy in this case is (see, e. g., \cite{ClaasExl})
\begin{equation}
\EEST(\potf) = \frac{1}{6}. 
\end{equation}
The exact analytical expression of the demagnetizing field   is (see  \cite{Engel_Hesjedal}):
$$
\begin{array}{rcl}
\Hd(\x) &= &\dps{ \frac{1}{4\pi} \biggl( \sum_{k, \ell, m=1}^2 (-1)^{k+\ell+m} \ln(z+(-1)^m \gamma + \varrho \biggr) \e_x}, \\
 &-&  \dps{ \frac{1}{4 \pi} \biggl(  \sum_{k, \ell, m=1}^2(-1)^{k+\ell+m} \arctan\bigl(\frac{(x+(-1)^k \gamma) (z+(-1)^m \gamma)}{(y+(-1)^\ell \gamma) \varrho} \bigr) \biggr) \e_y}, \\
 &+& \dps{ \frac{1}{4 \pi}\biggl(   \sum_{k, \ell, m=1}^2 (-1)^{k+\ell+m} \ln(x+(-1)^k \gamma + \varrho) \biggr) \e_z}, 
\end{array}
$$
where $\varrho = \sqrt{(x+(-1)^k \gamma)^2 + (y+(-1)^\ell \gamma)^2 + (z+(-1)^m \gamma)^2}$.   \\
It may be observed that  $\Mg.\n \ne 0$ on $\pt \omm$. Thus, $ \potf \not \in W^2_2(\Rt)$ 
(see Remark \ref{regularity_rem}).
The numerical results summarized in Table  \ref{table_res3} confirm the convergence of the method and 
show that here too the $L^2$ error on the stray field $\Hd$ decreases as $N^{-0.47}$. The error on the energy decreases as $N^{-0.93}$, 
while the error on the energy decreases like  $N^{-0.97}$.

\begin{center}
\begin{table}
\begin{tabular}{|c|cccc|}
  \hline
N  &  $\EEST(u)$ & $\EEST(u_N)$ & $\dps{\frac{|\EEST(u)-\EEST(u_N)|}{\EEST(u)}}$ & $e_0(\Hd)$ \\
  \hline
 10 &  0.16666666       &  0.14711046  & 0.1173 &0.3397  \\
20 & -                           & 0.15617466   & 6.3E-2  & 0.2499  \\
30 & -                           & 0.15951131   & 4.2E-2. & 0.2066 \\
40 & -                           & 0.16123614   & 3.2E-2  & 0.180  \\
50 & -                           & 0.16229007   & 2.62E-2& 0.1618  \\
60 & -                           & 0.16300181   & 2.19E-2&0.1481  \\
    \hline 
 {The log. slope}&  & &-0.94 &-0.47  \\
  \hline
\end{tabular} 
  \caption{The exact and the approximate stray-field energy due to an homogeneously magnetized  cube  (example 3).  }\label{table_res3}
\end{table}
\end{center}

\section{Conclusion and perspectives}\label{conclusion_sec}
The formula  \eqref{u_expression_th}, in addition to being original, has several advantages both theoretically and numerically. 
From a computational  point of view, it has been established that the formula inspires a particularly efficient and easy to implement 
numerical method to calculate the demagnetizing field and the associated energy. Indeed, the numerical results show a rapid convergence of the method especially when $\Mg.\n = 0$ on $\pt \omm$.  In the latter case, the observed convergence is even faster than that predicted by the error estimate in Theorem \ref{conver_theorem} since the convergence in energy is of order close to $O(\frac{1}{N^3})$.  This suggests that these estimates are not optimal and could possibly be improved theoretically. In the case $\Mg.\n \ne 0$, the method also converges in accordance with the lemma, but one notes that  convergence of the  energy is of order close to $O(\frac{1}{N})$. This fact remains to be proven theoretically.  \\
$\;$\\
From a theoretical point of view, one could exploit  formula  \eqref{u_expression_th} to give a new expression to the functional to be minimized. 
Actually,  the total free energy can be expressed as: 
\begin{equation}\label{totalEnrg_bis}
\begin{array}{rcl}
E_{tot}(\Mg) &= & \dps{\alpha \int_{\omm} | \grad \Mg|^2 dx + \int_{\omm} \phi(\Mg) dx  - \mu_0 \int_{\omm}  \H_{ex} . \Mg dx} \\
&&+ \dps \sum_{k=0}^\infty \frac{2 \mu_0}{4(k+1)^2-1} \sum_{\alpha \in \INDS_k}  \left(\int_{\omm}\Mg.\grad \ww_{\alpha} dx\right)^2  + E_s. 
\end{array}
\end{equation}
It is well known that  the minimization of the functional $E_{tot}$ with respect
to the variable $\Mg$ under  Heisenberg-Weiss constraint  \eqref{HeiWeiCtr}  leads to the following partial differential
equation (see, e. g., \cite{hubert} and references therein):   
\begin{equation}\label{EL_equa_Magn}
\dps{ -2 \alpha \Delta \Mg + \grad_{\Mg} \phi (\Mg)- \mu_0 (\Hd+\H_{ext}) }= \dps{ \lambda \Mg}  \mbox{ in } \omm,  
\end{equation}
where $\lambda$ is a lagrangian multiplier. By sake of simplificity we assumed here that $E_s = 0$ (the reader can refer to, e. g.,
\cite{hubert} for the general equations taking into account this term). \\
Formula  \eqref{u_expression_th}  simplifies the system \eqref{EL_equa_Magn} 
and reduces it to only one equation
\begin{equation}
\begin{array}{l}
\hspace{1cm} -2 \alpha \Delta \Mg + \grad_{\Mg} \phi (\Mg)-  \mu_0 \H_{ext}  \\
\hspace{3cm}+ \; \dps \mu_0  \sum_{k=0}^\infty    \sum_{\alpha \in \INDS_k} \frac{2}{4(k+1)^2-1}  \left(\int_{\omm}\Mg.\grad \ww_{\alpha} dx\right)  \grad \ww_{\alpha} = \lambda \Mg \mbox{ in } \omm. 
 \end{array}
\end{equation}
The study of this non-local  PDE could provide new information about the best configuration 
minimizing the functional  $E_{tot}$.  If we truncate the serie on the left-hand side, keeping only the first term, we obtain the simplified approximate non local equation
\begin{equation}
 -2 \alpha \Delta \Mg + \grad_{\Mg} \phi (\Mg)-  \mu_0 \H_{ext} +\frac{2 \mu_0}{3 \pi^2 (|\x|^2+1)^{3/2}}  \biggl(\int_{\omm}\frac{\Mg.\x}{(|\x|^2+1)^{3/2}} dx\biggr)     \x = \lambda \Mg \mbox{ in } \omm. 
\end{equation}
The study of this kind of equations is beyond the scope of this paper; it will be the subject of a forthcoming paper.  \\
%
%
%
%
%

\appendix

\subsection*{A. Proof of Proposition \ref{formula_of_gradient_lem_bis}}
The objective here is to prove formula \eqref{formula_of_gradient_ww_bis}. Let $Y$ be an arbitrary smooth function defined
 on $\S^3$  and set 
$$
W(\x) =\left( \frac{2}{|\x|^2+1} \right)^{1/2} Y (\pi^{-1}(\x)), 
$$
(thus, if $Y = \YY_\alpha$, $\alpha \in \INDS$, then $W = \ww_\alpha$). 
In \cite{ararboulmezaoud1} and \cite{arar_boulmez_kk} (formula A.9), the authors prove the following identity (linking the gradient of 
$W$ to $Y$ and its tangential derivatives on the unit sphere): 
\begin{equation}\label{gradWgradY}
\grad W(\x)  = ( 1-\xi_4)^{1/2} \left(S(\xi)  \grad_{\xi} Y(\xi) - \frac{1}{2} Y(\xi) \hxi \right), \;\mbox{ for } \x \in \R^3,  
\end{equation} 
where  $\xi = \pi^{-1}(\x) \in  \S^3,$ $\hxi = (\xi_1, \xi_2, \xi_3)$ is the orthogonal projection of $\xi =  (\xi_1, \xi_2, \xi_3, \xi_4)$ on $\R^3$, 
$ \grad_{\xi} Y$ is the tangential gradient of $Y$ on $\S^3$  and $S(\xi)$ is the $3 \times 4$ rectangular
matrix 
\begin{equation}
\begin{array}{rcl}
S(\xi) &=&
\left(
\begin{array}{cccc}
1-\xi_4 & 0 & 0 & \xi_1 \\ 
0 & 1-\xi_4  & 0 &  \xi_2 \\ 
0 & 0 & 1-\xi_4  &  \xi_3 
\end{array}
\right) \\
& =& \left(
\begin{array}{cccc}
1-\cos \chi  & 0 & 0 & \dps{  \cos \phi \sin \theta \sin \chi } \\ 
0 & 1-\cos \chi  & 0 & \dps{ \sin \phi \sin \theta \sin \chi   } \\ 
0 & 0 & 1-\cos \chi   & \dps{\cos \theta  \sin \chi    }
\end{array}
\right). 
\end{array}
\end{equation}
It remains to spell out the expression of the tangential gradient $ \grad_{\xi} Y(\xi) $ in terms of 
partial derivatives of $Y$ with respect to $\phi$, $\theta$ and $\chi$, the spherical coordinates of $\xi$ (see section \ref{ABfunc_sec}). We state this
\begin{lemma}
If $\sin \chi \ne 0$, then 
\begin{equation}\label{gradFphithetachi}
\grad_{\!\xi} Y(\xi)  =  \frac{1}{\sin \chi} \!\! 
\left(\!\! \!
\begin{array}{cccc}
 -\dps{ \sin \phi  }    & \dps{ \cos \phi \cos \theta    } &  \cos \phi \sin \theta    \cos \chi \\
  \dps{ \cos \phi  }  & \dps{ \sin \phi \cos \theta  } &   \sin \phi \sin \theta  \cos \chi\\
 0 &-\dps{ \sin \theta  } &  \cos \theta  \cos \chi\\
0 &0   &  -\sin \chi
\end{array}
 \!\! \! \right)
\left( \!\! \!
\begin{array}{c}
 \dps{ \frac{1}{\sin \theta}  \frac{\pt  \tY}{ \pt  \phi} (\phi, \theta, \chi)  }  \\
 \dps{ \frac{\pt  \tY}{ \pt  \theta}(\phi, \theta, \chi) }\\
 \dps{ \sin \chi   \frac{\pt  \tY}{ \pt  \chi}( \phi, \theta, \chi) } 
\end{array}
 \!\! \! \right).
\end{equation}
where $\tY(\phi, \theta, \chi) = Y(\cos \phi \sin \theta \sin \chi, \sin \phi \sin \theta \sin \chi,  \cos \theta \sin \chi, \cos \chi)$. 
\end{lemma}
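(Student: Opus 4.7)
The plan is to recognize the right-hand side as the standard coordinate expression of the gradient of $\tilde Y$ in the orthogonal spherical frame on $\S^3$. First, I would compute the three coordinate vectors obtained by differentiating the parameterization $\xi(\phi,\theta,\chi)$ in \eqref{spherical_coordR3}:
\begin{align*}
\partial_\phi \xi &= (-\sin\theta\sin\phi\sin\chi,\; \sin\theta\cos\phi\sin\chi,\; 0,\; 0),\\
\partial_\theta \xi &= (\cos\theta\cos\phi\sin\chi,\; \cos\theta\sin\phi\sin\chi,\; -\sin\theta\sin\chi,\; 0),\\
\partial_\chi \xi &= (\sin\theta\cos\phi\cos\chi,\; \sin\theta\sin\phi\cos\chi,\; \cos\theta\cos\chi,\; -\sin\chi).
\end{align*}
A direct inspection shows they are mutually orthogonal in $\R^4$, with norms $\sin\theta\sin\chi$, $\sin\chi$, and $1$ respectively, so
$$e_\phi := \frac{\partial_\phi \xi}{\sin\theta\sin\chi},\qquad e_\theta := \frac{\partial_\theta \xi}{\sin\chi},\qquad e_\chi := \partial_\chi \xi$$
form an orthonormal basis of $T_\xi \S^3$ whenever $\sin\chi\neq 0$ and $\sin\theta\neq 0$ (the latter singularity being only a coordinate artifact).

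Next, the chain rule gives $\partial_\phi \tilde Y = dY(\partial_\phi \xi)$, and similarly for $\theta$ and $\chi$. Since $\grad_\xi Y$ is the tangential vector on $\S^3$ characterized by $\langle \grad_\xi Y, v\rangle = dY(v)$ for all $v\in T_\xi\S^3$, writing $\grad_\xi Y = a\, e_\phi + b\, e_\theta + c\, e_\chi$ and testing against each frame vector yields
$$\grad_\xi Y = \frac{\partial_\phi \tilde Y}{\sin\theta\sin\chi}\, e_\phi + \frac{\partial_\theta \tilde Y}{\sin\chi}\, e_\theta + \partial_\chi \tilde Y\, e_\chi.$$

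Finally, I would identify this identity with the claimed matrix formula. Observe that the three columns of the $4\times 3$ matrix appearing in \eqref{gradFphithetachi} are precisely the coordinates of $e_\phi$, $e_\theta$, and $e_\chi$ in the canonical basis of $\R^4$. Multiplying this matrix by the column vector $\bigl((\sin\theta)^{-1}\partial_\phi\tilde Y,\; \partial_\theta\tilde Y,\; \sin\chi\, \partial_\chi\tilde Y\bigr)^T$ and dividing by $\sin\chi$ reproduces exactly the three coefficients above, concluding the proof.

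The main step is the bookkeeping of the frame and the matching of its columns with the matrix in \eqref{gradFphithetachi}; there is no real obstacle, since orthogonality of the coordinate vectors and the computation of their norms are direct and the final step is only a rewriting. The singularity at $\sin\chi = 0$ is excluded by hypothesis, and the apparent singularity at $\sin\theta=0$ cancels once one multiplies back by $\sin\theta$ inside the matrix entries (as exploited later in \eqref{formu_sansdiv}).
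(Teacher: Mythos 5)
Your proof is correct, but it takes a genuinely different route from the paper's. The paper extends $Y$ to the $0$-homogeneous function $F(\y)=Y(\y/|\y|)$ on $\R^4\setminus\{0\}$, identifies $\grad_\xi Y$ with the ambient gradient $\grad F$ on $\S^3$, writes the chain-rule expressions for $\partial_\phi\widetilde Y$, $\partial_\theta\widetilde Y$, $\partial_\chi\widetilde Y$ together with Euler's identity $\sum_{i=1}^4 y_i\,\partial_{y_i}F=0$ as a square $4\times 4$ linear system in the four partials of $F$, and inverts that system to produce the matrix $R(\xi)$, whose first three columns give \eqref{gradFphithetachi}. You instead use the intrinsic characterization $\langle\grad_\xi Y,v\rangle=dY(v)$ for $v\in T_\xi\S^3$ and exploit the mutual orthogonality of the coordinate vectors $\partial_\phi\xi$, $\partial_\theta\xi$, $\partial_\chi\xi$ (with norms $\sin\theta\sin\chi$, $\sin\chi$, $1$) to read off the coefficients in the resulting orthonormal frame; the columns of the matrix in \eqref{gradFphithetachi} are exactly those normalized frame vectors, and the bookkeeping checks out. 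The two viewpoints are reconciled by observing that $\grad F(\xi)$ is tangential to $\S^3$ (that is precisely Euler's identity) and pairs with tangent vectors to give $dY$, so it coincides with the Riemannian gradient you compute. Your argument buys a shorter computation --- no inversion of a $4\times4$ matrix --- and makes the orthogonal-frame structure of the formula transparent; the paper's version stays entirely in ambient Cartesian coordinates and yields, as a by-product, the full matrix $R(\xi)$ including the fourth (radial) column that encodes Euler's identity.
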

\begin{proof}
Consider the $0$-homogeneous function $F$ defined over $\R^4 \backslash \{0\}$ by
$$
F(\y) = Y(\frac{\y}{|\y|}), \; \y \in \R^4  \backslash \{0\}. 
$$
It follows that 
\begin{equation}
\grad_\xi Y(\xi) = \grad F(\xi) \mbox{ for  } \xi \in \S^3. 
\end{equation}
In view of Euler's homogeneous function lemma, we have
\begin{equation}\label{eulerThHomo}
\sum_{i=1}^4 y_i   \frac{\pt  F}{ \pt  y_i} (\y) = 0 \mbox{ for } \y \in \R^4 \backslash \{0\}. 
\end{equation}
Since 
$$
\tY(\phi, \theta, \chi) =  F(\cos \phi \sin \theta \sin \chi, \sin \phi \sin \theta \sin \chi,  \cos \theta \sin \chi, \cos\chi),
$$
we deduce that   
$$
\begin{array}{rcl}
\dps{  \frac{\pt  \tY}{ \pt  \phi}(\phi, \theta, \chi)   } &=& \dps{   \sin \theta \sin \chi \left(\!\!  -    \sin \phi   \frac{\pt  F}{ \pt  y_1}(\y)    +  \cos \phi   \frac{\pt  F}{ \pt  y_2}(\y)\!\! \right),  }  \\
\dps{  \frac{\pt  \tY}{ \pt  \theta} (\phi, \theta, \chi)  } &=& \dps{    \cos \theta \sin \chi     \left(\!\!   \cos \phi \frac{\pt  F}{ \pt  y_1}(\y)    +   \sin \phi    \frac{\pt  F}{ \pt  y_2}(\y) \!\! \right) } \\
&& \dps{  -\sin \theta \sin \chi     \frac{\pt  F}{ \pt  y_3} (\y),   }  \\
\dps{  \frac{\pt  \tY}{ \pt  \chi}(\phi, \theta, \chi)   } &=& \dps{   \sin \theta \cos \chi  \left(\!\!  \cos \phi  \frac{\pt  F}{ \pt  y_1} (\y)   +   \sin \phi    \frac{\pt  F}{ \pt  y_2}(\y)  \!\! \right) } \\
&&  +  \dps{   \cos \theta \cos \chi   \frac{\pt  F}{ \pt  y_3}(\y)  - \sin \chi    \frac{\pt  F}{ \pt  y_4}(\y),   }  \\
\end{array}
$$
where $\y = (\cos \phi \sin \theta \sin \chi, \sin \phi \sin \theta \sin \chi,  \cos \theta \sin \chi, \cos\chi)$. \\
Completing these identities with equation \eqref{eulerThHomo} gives a square linear system in
terms of the derivatives $ \frac{\pt  F}{ \pt  y_i}(\y) $, $1 \leq i \leq 4$.  Inverting this system
gives 
$$
\grad F(\y) = R(\xi) 
\left(
\begin{array}{c}
 \dps{   \frac{\pt  \tY}{ \pt  \phi} }  \\
 \dps{    \frac{\pt  \tY}{ \pt  \theta} }\\
 \dps{    \frac{\pt  \tY}{ \pt  \chi} } \\
   0,
\end{array}
\right)
$$
where
$$
R(\xi)  = 
\left( \!\!
\begin{array}{cccc}
 -\dps{ \frac{\sin \phi}{ \sin \theta \sin \chi  }   }    & \dps{ \frac{\cos \phi \cos \theta }{ \sin \chi }   } &  \cos \phi \sin \theta    \cos \chi& \cos \phi \sin \theta  \sin \chi  \\
  \dps{ \frac{ \cos \phi }{ \sin \theta  \sin \chi  } }  & \dps{ \frac{\sin \phi \cos \theta  }{ \sin \chi }   } &   \sin \phi \sin \theta  \cos \chi&\sin \phi  \sin \theta  \sin \chi \\
 0 &-\dps{ \frac{\sin \theta }{ \sin \chi }   } &  \cos \theta  \cos \chi&  \cos \theta  \sin \chi \\
0 &0   &  -\sin \chi & \cos \chi
\end{array}
 \!\! \right) 
$$
This ends the proof of \eqref{gradFphithetachi}. Formula \eqref{formula_of_gradient_ww_bis} is a direct consequence of \eqref{gradWgradY} and \eqref{gradFphithetachi}.
\end{proof}

\subsection*{B. The first few three-dimensional Arar-Boulmezaoud functions} $\;$\\
In this appendix we an give explicit formulas of the first few three-dimensional 
Arar-Boulmezaoud functions defined by \eqref{express_eigen1}.   These functions are illustrated
 in Table \ref{table_expli_func}.
\begin{table}\label{table_expli_func}
\begin{center}
\begin{tabular}{|c|c|c|c|}
  \hline
  $k$ & $\ell$ & $m$ & $\ww_{(k, \ell, m)}(\x)$   \\
 \hline 
0& 0& 0  & $\dps{  \frac{1}{ \pi (|\x|^2+1)^{1/2}  }}$\\
 \hline 
1& 0& 0  & $\dps{  \frac{2}{\pi}  \frac{|x|^2-1}{(|\x|^2+1)^{3/2}} }$   \\
 \hline
& 1& 0  & $\dps{  \frac{4}{\pi}  \frac{x_3}{(|\x|^2+1)^{3/2} }}$     \\
 \hline
& & 1  &   $\dps{  \frac{4}{\pi}  \frac{x_1}{(|\x|^2+1)^{3/2} }}$ \\
 \hline
& & -1  &   $\dps{  \frac{4}{\pi}  \frac{x_2}{(|\x|^2+1)^{3/2}} }$     \\ 
 \hline
2& 0& 0  &   $\dps{ \frac{1}{\pi}   } \frac{3|\x|^4-10 |\x|^2 +3}{(|\x|^2+1)^{5/2}}  $     \\
 \hline
& 1& 0  &  $\dps{  \frac{4 \sqrt{6}}{\pi}  \frac{x_3(|\x|^2-1)}{(|\x|^2+1)^{5/2}} }$      \\
 \hline
& & 1 &   $\dps{  \frac{4\sqrt{6}}{\pi}  \frac{x_1(|\x|^2-1)}{(|\x|^2+1)^{5/2}} }$       \\
 \hline
& & -1  &   $\dps{  \frac{4\sqrt{6}}{\pi}  \frac{x_2(|\x|^2-1)}{(|\x|^2+1)^{5/2}} }  $     \\
 \hline
& 2& 0  & $\dps{  \frac{4\sqrt{2}}{\pi} \frac{3 x_3^2 -  |\x|^2 }{(|\x|^2+1)^{5/2}}  }$  \\
 \hline
& & 1 &  $\dps{  \frac{8\sqrt{6}}{\pi} \frac{x_1 x_3 }{(|\x|^2+1)^{5/2}}  }$    \\
 \hline
& & 2  & $\dps{  \frac{4\sqrt{6}}{\pi} \frac{x^2_1 - x_2^2 }{(|\x|^2+1)^{5/2}}  }$     \\
 \hline
& & -1 &   $\dps{  \frac{8\sqrt{6}}{\pi} \frac{x_2 x_3 }{(|\x|^2+1)^{5/2}}  }$    \\
 \hline
& & -2  &   $\dps{  \frac{8\sqrt{6}}{\pi} \frac{x_1x_2 }{(|\x|^2+1)^{5/2}}  }$    \\
  \hline
\end{tabular}
\end{center}
\caption{Explicit expressions of the first Arar-Boulmezaoud functions in $\R^3$}
\end{table}

$\;$\\
{\bf Declarations}. \\
{\it Conflict of interest}: The author declares no competing interests.

\bibliographystyle{plain}
\bibliography{micromagAB.bib}

\end{document}